\providecommand{\keywords}[1]{\bigskip\textbf{\textit{Keywords:}} #1}
\newtheorem{theorem}{Theorem}
\newtheorem{Theorem}[theorem]{Theorem}
\newtheorem{Proposition}{Proposition}
\newtheorem{Lemma}[theorem]{Lemma}
\newtheorem{Corollary}[theorem]{Corollary}
\newcommand{\Trule}{\rule{0pt}{3ex}}
\newcommand{\Brule}{\rule[-1.5ex]{0pt}{0pt}}
\newcommand{\optprobA}[3]{
\begin{center}
\begin{tabularx}{0.8\textwidth}{|l X|}
	\hline
	\multicolumn{2}{|c|}{#1\Trule} \\
	\textbf{Input:\ }&{#2}\\
	\textbf{Solution:\ }&{#3\Brule}\\
	\hline
\end{tabularx}
\end{center}
}
\newcommand{\parprobA}[4]{
\begin{center}
\begin{tabularx}{0.8\textwidth}{|l X|}
	\hline
	\multicolumn{2}{|c|}{#1\Trule} \\
	\textbf{Input:\ }&{#2}\\
	\textbf{Parameter:\ }&{#3}\\
	\textbf{Solution:\ }&{#4\Brule}\\
	\hline
\end{tabularx}
\end{center}
}
\newcommand{\FPT}{\ensuremath{\mathsf{FPT}}\xspace}
\newcommand{\W}[1]{\ensuremath{\mathsf{W[#1]}}\xspace}
\newcommand{\NP}{\ensuremath{\mathsf{NP}}\xspace}
\newcommand{\F}{\ensuremath{\mathcal{F}}\xspace}
\newcommand{\sumdiversity}{\textrm{div}_{\mathrm{total}}}
\newcommand{\mindiversity}{\textrm{div}_{\mathrm{min}}}
\newcommand{\Hammingdistance}{d_H}
\newcommand{\df}{:=}
\newcommand{\Ss}{\ensuremath{\mathcal{S}}\xspace}
\newcommand{\Ocal}{\ensuremath{\mathcal{O}}\xspace}
\newcommand{\bR}{\mathbb{R}}
\newcommand{\sol}{\texttt{sol}}
\newcommand{\VC}{\textsc{Vertex Cover}\xspace}
\newcommand{\dHS}{\textsc{$d$-Hitting Set}\xspace}
\newcommand{\FVS}{\textsc{Feedback Vertex Set}\xspace}
\newcommand{\fvs}{feedback vertex set}
\newcommand{\rDkVC}{\textsc{Diverse Vertex Cover}\xspace}
\newcommand{\rDkdHS}{\textsc{Diverse $d$-Hitting Set}\xspace}
\newcommand{\rDkFVS}{\textsc{Diverse Feedback Vertex Set}\xspace}
\newcommand{\rMDkVC}{\textsc{Min-Diverse Vertex Cover}\xspace}
\newcommand{\rMDkdHS}{\textsc{Min-Diverse $d$-Hitting Set}\xspace}
\newcommand{\rMDkFVS}{\textsc{Min-Diverse Feedback Vertex Set}\xspace}
\newcommand{\maxcostflow}{\textsc{Maximum Cost Flow}\xspace}
\newcommand{\MWBM}{\textsc{Maximum Weight $b$-Matching}\xspace}
\renewcommand\paragraph[1]{\medskip\noindent\textbf{#1}}
\newenvironment{question}[1]
	{~\par\medskip\paragraph{Open Question (#1).}}
	{~\par\medskip}
\title{
  FPT Algorithms for
  Diverse Collections of Hitting Sets%
  \thanks{Tomáš Masařík received funding from the European Research Council
    (ERC) under the European Union's Horizon 2020 research and innovation
    programme Grant Agreement 714704, and from Charles University student grant
    SVV-2017-260452. Lars Jaffke is supported by the Bergen Research Foundation
    (BFS). Geevarghese Philip received funding from the following sources: the
    European Research Council (ERC) under the European Union's Horizon 2020
    research and innovation programme (Grant Agreement No 819416), the Norwegian
    Research Council via grants MULTIVAL and CLASSIS, BFS (Bergens Forsknings
    Stiftelse) "Putting Algorithms Into Practice" Grant Number 810564 and NFR
    (Norwegian Research Foundation) grant number 274526d "Parameterized
    Complexity for Practical Computing".}
    \thanks{The final version of this article has been accepted in the Special Issue \textit{New Frontiers in Parameterized Complexity and Algorithms} of Algorithms journal~\cite{final}.}
  }
\author[1]{Julien Baste}
\author[2]{Lars Jaffke}
\author[3,4]{Tomáš Masařík}
\author[5]{\\ Geevarghese Philip}
\author[6]{G\"unter Rote}
\affil[1]{Institute of Optimization and Operations Research, Ulm University, Germany}
\affil[ ]{\texttt{julien.baste@uni-ulm.de}}
\affil[2]{University of Bergen, Norway}
\affil[ ]{\texttt{lars.jaffke@uib.no}}
\affil[3]{Charles University, Prague, Czech Republic}
\affil[4]{University of Warsaw, Poland}
\affil[ ]{\texttt{masarik@kam.mff.cuni.cz}}
\affil[5]{Chennai Mathematical Institute, Chennai, India and UMI ReLaX}
\affil[ ]{\texttt{gphilip@cmi.ac.in}}
\affil[6]{Freie Universit\"at Berlin}
\affil[ ]{\texttt{rote@inf.fu-berlin.de}}
\date{}
\begin{document}

\maketitle

\begin{textblock}{20}(0, 12.5)
\includegraphics[width=40px]{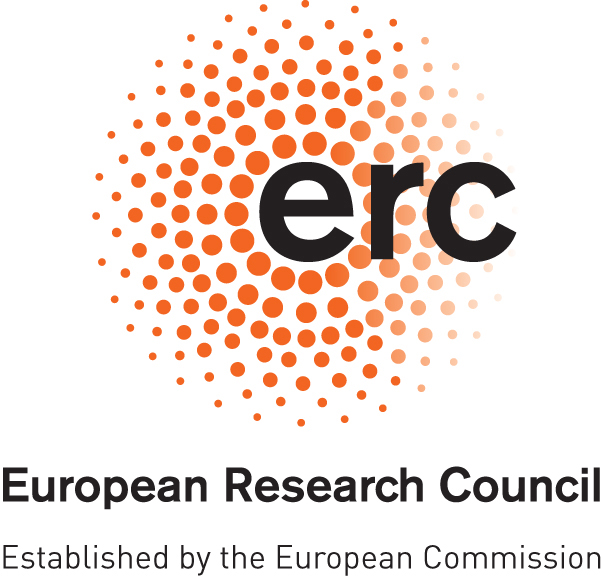}%
\end{textblock}
\begin{textblock}{20}(-0.25, 12.9)
\includegraphics[width=60px]{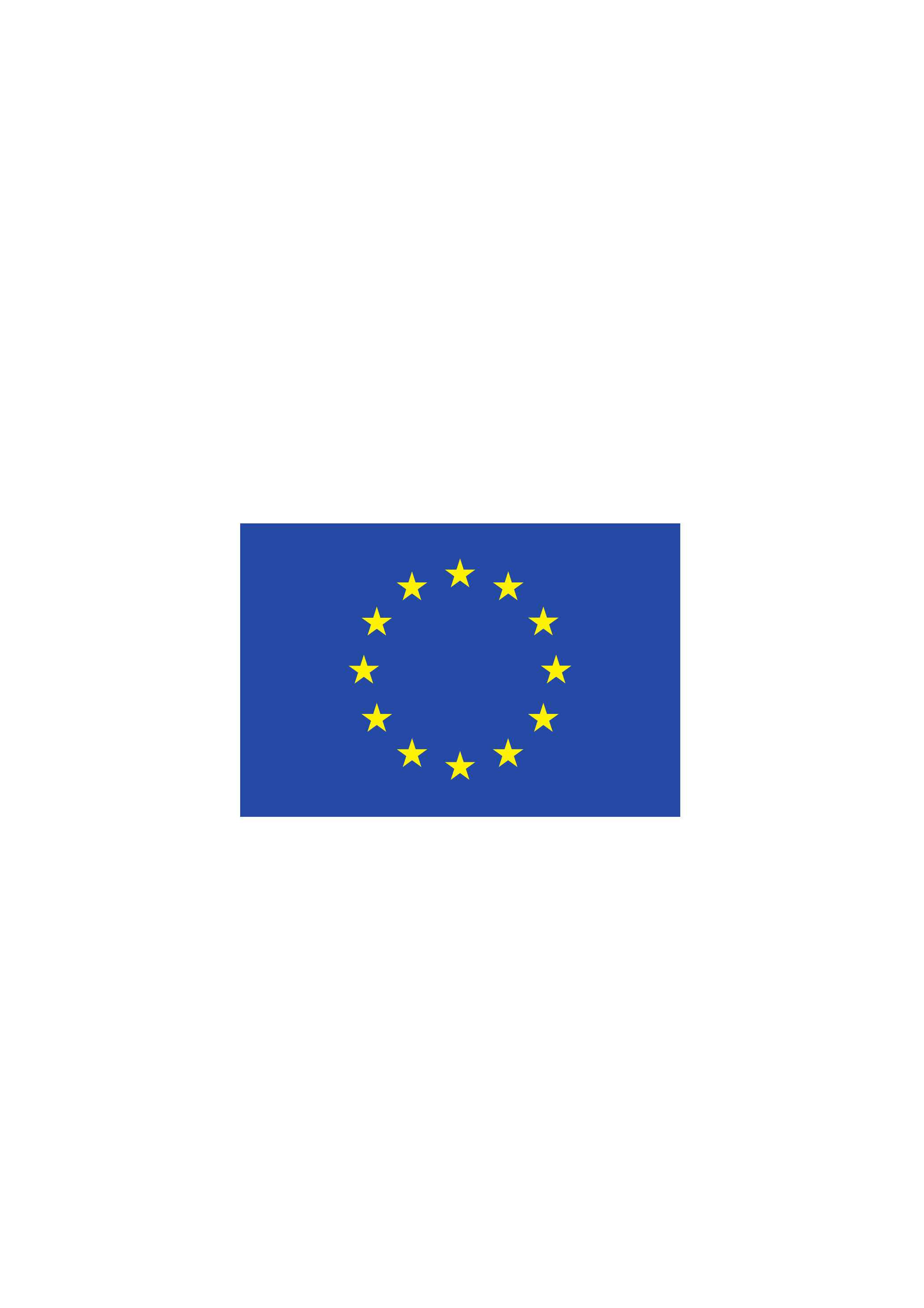}%
\end{textblock}

\abstract{In this work, we study the \dHS and \FVS problems through the paradigm of
  finding diverse collections of $r$ solutions of size at most $k$ each, which
  has recently been introduced to the field of parameterized complexity [Baste
  et al., 2019]. This paradigm is aimed at addressing the loss of important
  \emph{side information} which typically occurs during the abstraction process
  which models real-world problems as computational problems. We use two
  measures for the diversity of such a collection: the sum of all pairwise
  Hamming distances, and the minimum pairwise Hamming distance. We show that
  both problems are FPT in $k + r$ for both diversity measures. A key ingredient
  in our algorithms is a (problem independent) network flow formulation that,
  given a set of `base' solutions, computes a maximally diverse collection of
  solutions. We believe that this could be of independent interest.}

\keywords{Solution Diversity; Fixed-Parameter Tractability; Hitting Sets; Vertex Cover; Feedback Vertex Set; Hamming Distance}

\section{Introduction}
The typical approach in modeling a real-world problem as a computational problem
has, broadly speaking, two steps: (i) abstracting the problem into a
mathematical formulation which captures the crux of the real-world problem, and
(ii) asking for a best solution to the mathematical problem.

Consider the following scenario. Dr.~\Ocal organizes a panel discussion, and has
a shortlist of candidates to invite. From that shortlist, Dr.~\Ocal wants to
invite as many candidates as possible, such that each of them will bring an
individual contribution to the panel. Given two candidates \(A\) and \(B\), it
may not be beneficial to invite both \(A\) and \(B\), for various reasons: their
areas of expertise or opinions may be too similar for both to make a
distinguishable contribution, or it may be preferable not to invite more than
one person from each institution. It may even be the case that \(A\) and \(B\)
do not see eye-to-eye on some issues which could come up at the discussion, and
Dr.~\Ocal wishes to avoid a confrontation.

A natural mathematical model to resolve Dr.~\Ocal's dilemma is as an instance of
the \VC problem: each candidate on the shortlist corresponds to a vertex, and for
each pair of candidates \(A\) and \(B\), we add the edge between \(A\) and \(B\)
if it is \emph{not} beneficial to invite both of them. Removing a smallest
\emph{vertex cover} in the resulting graph results in a largest possible set of
candidates such that each of them may be expected to individually contribute to
the appeal of the event.

Formally, a \emph{vertex cover} of an
undirected graph \(G\) is any subset \(S \subseteq V(G)\) of the vertex set of
\(G\) such that every edge in \(G\) has at least one end-point in \(G\). The \VC
problem asks for a vertex cover of the smallest size:

\optprobA{\VC}{Graph \(G\).}{A vertex cover \(S\) of \(G\) of the smallest size.}

While the above model does provide Dr.~\Ocal with a set of candidates to invite
that is \emph{valid} in the sense that each invited candidate can be expected to
make a unique contribution to the panel, a vast amount of \emph{side
  information} about the candidates is lost in the modeling process. This side
information could have helped Dr.~\Ocal to get more out of the panel discussion.
For instance, Dr.~\Ocal may have preferred to invite more well-known or
established people over `newcomers', if they wanted the panel to be highly
visible and prestigious; or they may have preferred to have more `newcomers' in
the panel, if they wanted the panel to have more outreach. Other preferences
that Dr.~\Ocal may have had include: to have people from many different cultural
backgrounds, to have equal representation of genders, or preferential
representation for affirmative action; to have a variety in the levels of
seniority among the attendants, possibly skewed in one way or the other. Other
factors, such as the total carbon footprint caused by the participants' travels,
may also be of interest to Dr.~\Ocal. This list could go on and on.

Now, it is possible to plug in some of these factors into the mathematical
model, for instance by including weights or labels. Thus a vertex weight could
indicate `how well-established' a candidate is. 
However, the complexity of the model grows fast with each additional
criterion. The classic field of multicriteria optimization~\cite{MC05} addresses the issue
of bundling multiple factors into the objective function, but it is seldom
possible to arrive at a balance in the various criteria in a way which captures
more than a small fraction of all the relevant side
information.
Moreover, several
side criteria may be conflicting or incomparable (or both); consider in
Dr.~\Ocal's case `maximizing the number of different cultural backgrounds' vs.\
`minimizing total carbon footprint.'

While Dr.~\Ocal's story is admittedly a made-up one, the \VC problem is in fact
used to model \emph{conflict resolution} in far more realistic settings. In each
case there is a \emph{conflict graph} \(G\) whose vertices correspond to
entities between which one wishes to avoid a conflict of some kind. There is an
edge between two vertices in \(G\) if and only if they could be in conflict, and
finding and deleting a smallest vertex cover of \(G\) yields a largest
conflict-free subset of entities. We describe three examples to illustrate the
versatility of this model. In each case it is intuitively clear, just like in
Dr.~\Ocal's problem, that formulating the problem as \VC results in a lot of
significant side information being thrown away, and that while finding a
smallest vertex cover in the conflict graph will give a \emph{valid} solution,
it may not really help in finding a \emph{best} solution, \emph{or even a
  reasonably good solution}. We list some side information that is lost in the
modeling process; the reader should find it easy to come up with any amount of
other side information that would be of interest, in each case.
\begin{description}
\item[Air traffic control.] Conflict graphs are used in the design of decision
  support tools for aiding Air Traffic Controllers (ATCs) in preventing untoward
  incidents involving aircraft~\cite{vela2011understanding,idan2010efficient}. 
  Each node in the graph \(G\) in this instance is an
  aircraft, and there is an edge between two nodes if the corresponding aircraft
  are at risk of interfering with each other. A vertex cover of \(G\)
  corresponds to a set of aircraft which can be issued \emph{resolution
    commands} which ask them to change course, such that afterwards there is no
  risk of interference.

  In a situation involving a large number of aircraft it is unlikely that
  \emph{every} choice of ten aircraft to redirect is \emph{equally} desirable.
  For instance, in general it is likely that (i) it is better to ask smaller
  aircraft to change course in preference to larger craft, and (ii) it is better
  to ask aircraft which are cruising to change course, in preference to those
  which are taking off or landing.  
\item[Wireless spectrum allocation.] Conflict graphs are a standard tool in
  figuring out how to distribute wireless frequency spectrum among a large set
  of wireless devices so that no two devices whose usage could potentially
  interfere with each other are allotted the same
  frequencies~\cite{gandhi2007general,hoefer2014approximation}. Each node in
  \(G\) is a user, and there is an edge between two nodes if (i)~the users
  request the same frequency, and (ii)~their usage of the same frequency has
  the potential to cause interference. A vertex cover of \(G\) corresponds to a
  set of users whose requests can be denied, such that afterwards there is no
  risk of interference.

  When there is large collection of devices vying for spectrum it is unlikely
  that \emph{every} choice of ten devices to deny the spectrum is \emph{equally}
  desirable. For instance, it is likely that denying the spectrum to a
  remote-controlled toy car on the ground is preferable to denying the spectrum
  to a drone in flight.
\item[Managing inconsistencies in database integration.] A database constructed
  by integrating data from different data sources may end up being inconsistent
  (that is, violating specified integrity constraints) even if the constituent
  databases are individually consistent. Handling these inconsistencies is a
  major challenge in database integration, and conflict graphs are central to
  various approaches for restoring consistency~\cite{chomicki2005minimal,arenas2003scalar,pema2011tractability,ioannou2013management}. 
  Each node in \(G\) is a database item, and there
  is an edge between two nodes if the two items together form an inconsistency.
  A vertex cover of \(G\) corresponds to a set of database items in whose
  \emph{absence} the database achieves consistency.

  In a database of large size it is unlikely that all data are created equal;
  some database items are likely to be of better relevance or usefulness than
  others, and so it is unlikely that \emph{every} choice of ten items to delete
  is \emph{equally} desirable.
\end{description}

Getting back to our first example, it seems difficult to help Dr.~\Ocal with
their decision by employing the `traditional' way of modeling computational
problems, where one looks for one best solution. If on the other hand, Dr.~\Ocal
was presented with a \emph{small set of good solutions} that in some sense are
\emph{far apart}, then they might hand-pick the list of candidates that they
consider the best choice for the panel and make a more informed decision.
Moreover, several forms of side-information may \emph{only become apparent once
  Dr.~\Ocal is presented some concrete alternatives}, and are more likely to be
retrieved from alternatives that look very different. That is, a bunch of good
quality, dissimilar solutions may end up capturing a lot of the ``lost'' side
information. And this applies to each of the other three examples as well. In
each case, finding one best solution could be of little utility in solving the
original problem, whereas finding a \emph{small set of solutions, each of good
  quality, which are not too similar to one another} may offer much more help.

To summarize, real-world problems typically have complicated side constraints,
and the optimality criterion may not be clear. Therefore, the abstraction to a
mathematical formulation is almost always a simplification, 
omitting important side information. 
There are at least two obstacles to
simply adapting the model by incorporating these secondary criteria into the
objective function or taking into account the side constraints: (i) they make
the model complicated and unmanagable, and (ii) more importantly, these criteria
and constraints are often not precisely formulated, potentially even unknown a priori. 
There may even be no sharp
distinction between optimality criteria and constraints (the so-called ``soft
constraints''). 

One way of dealing with this issue is to present a small number $r$ of
\emph{good} solutions and let the \emph{user }choose between them, based on all
the experience and additional information that the user has and that is ignored
in the mathematical model. Such an approach is useful even when the objective
can be formulated precisely, but is difficult to optimize: After generating $r$
solutions, each of which is \emph{good enough} according to some quality
criterion, they can be compared and screened in a second phase, evaluating their
exact objective function or checking additional side constraints. In this
context, it makes little sense to generate solutions that are very similar to
each other and differ only in a few features. It is desirable to present a
\emph{diverse} variety of solutions.

It should be clear that the issue is scarcely specific to \VC. Essentially
\emph{any} computational problem motivated by practical applications likely has
the same issue: the modeling process throws out so much relevant side
information that any algorithm which finds just one optimal solution to an input
instance may not be of much use in solving the original problem in practice. One
scenario where the traditional approach to modeling computational problems
fails completely is when computational problems may combined with a human sense of
aesthetics or intuition to solve a task, or even to stimulate inspiration. Some
early relevant work is on the problem of designing a tool which helps an
architect in creating a floor plan which satisfies a specified set of
constraints. In general, the number of feasible floor plans---those which
satisfy constraints imposed by the plot on which the building has to be erected,
various regulations which the building should adhere to, and so on---would be
too many for the architect to look at each of them one by one. 
Further, many of these plans
would be very similar to one another, so that it would be pointless for the
architect to look at more than one of these for inspiration.  
As an alternative to optimization for such problems, Galle proposed a
``Branch \& Sample'' algorithm for generating a ``limited, representative sample
of solutions, uniformly scattered over the entire solution
space''~\cite{galle1989branch}.

\paragraph{The Diverse \(X\) Paradigm.}
Mike Fellows has proposed \emph{the Diverse \(X\) Paradigm} as a solution for
these issues and others~\cite{fellows2018diverseXparadigm}. In this paradigm
``\(X\)'' is a placeholder for an optimization problem, and we study the
complexity---specifically, the fixed-parameter tractability---of the problem of
finding a few different good quality solutions for \(X\). Contrast this with the
traditional approach of looking for just one good quality solution. Let \(X\)
denote an optimization problem where one looks for a minimum-size subset of some
set; \VC is an example of such a problem. The generic form of \(X\) is then:

\optprobA{\(X\)}{An instance \(I\) of \(X\).}{A solution \(S\) of \(I\) of
  the smallest size.}

Here the form that a ``solution \(S\) of \(I\)'' takes is dictated by the
problem \(X\); compare this with the earlier definition of \VC.

The \emph{diverse} variant of problem \(X\), as proposed by Fellows, has the
form

\parprobA{\textsc{Diverse \(X\)}}{An instance \(I\) of \(X\), and positive
  integers \(k, r, t\).}{\((k, r)\)}{A set \(\Ss\) of \(r\) solutions of \(I\),
  each of size at most \(k\), such that a \emph{diversity measure} of \Ss is at
  least \(t\).}

Note that one can construct diverse variants of other kinds of problems as well,
following this model: it doesn't have to be a minimization problem, nor does the
solution have to be a subset of some kind. Indeed, the example about floor plans
described above has neither of these properties. What is relevant is that one
should have (i) some notion of ``good quality'' solutions (for \(X\), this
equates to a small size) and (ii) some notion of a set of solutions being
``diverse''.

\paragraph{Diversity measures.}
The concept of diversity appears also in other fields, and
there are many different ways to measure the diversity of 
a 
{collection}. 
For example, in
ecology, the diversity of a set of species (``biodiversity'') is a
topic that has become increasingly important in recent times,
see for example Solow and Polasky~\cite{Solow1994}.

Another possible viewpoint, in the context of multicriteria
optimization,
is to require that the sample of solutions should try
to represent the \emph{whole solution space}. This concept can be quantified for example by the geometric \emph{volume} of the represented
space~\cite{MaximumVolume,Hypervolume2d}, or by
the \emph{discrepancy}~\cite{Neumann:2018}.
See
\cite[Section~3]{defining-diversity-measures-2010} for an overview of
diversity measures in multicriteria
optimization. 

In this paper, we follow
the simple possibility of looking for a collection of good solutions
 that
have large \emph{distances} from each other,
in a sense that will be made precise below~{\eqref{eq:objectivemin}--\eqref{eq:objectivetot}}.
Direction~\eqref{eq:objectivetot}, i.e., taking the pairwise sum of all Hamming distances, has been taken by many practical papers in the area of genetic algorithms, see e.g.~\cite{GaborBPS18,MorrisonJ01}.
This now classical approach can be traced as far back as 1992~\cite{LouisR92}.
In~\cite{WinebergO03a}, it has been boldly stated that this measure (and its variations) is one of the most broadly used measures in describing population diversity within genetic algorithms.
One of its advantages is that it can be computed very easily and efficiently unlike many other measures, e.g., some geometry or discrepancy based measures.

\subsection{Our problems and results.}

In this work we focus on diverse versions of two minimization problems, \dHS and
\FVS, whose solutions are subsets of a finite set. \dHS is in fact a
\emph{class} of such problems which includes \VC, as we describe below. 
We will consider two natural diversity measures
for these problems: the minimum Hamming distance between any two solutions, and
the sum of pairwise Hamming distances of all the solutions.

The \emph{Hamming distance} between two sets \(S\) and \(S'\), or \emph{the size
  of their symmetric difference}, is
\begin{displaymath}
  d_H(S,S') := | (S \setminus S') \cup (S'\setminus S) |.
\end{displaymath}
We use
\begin{equation}
  \label{eq:objectivemin}
  \mindiversity(S_1,\ldots, S_r) := \min_{1 \leq i<j\leq r}\Hammingdistance(S_i,S_j)
\end{equation}
to denote the minimum Hamming distance between any pair
of sets in a collection of finite sets,
and
\begin{equation}
  \label{eq:objectivetot}
  \sumdiversity(S_1,\ldots, S_r) := \sum_{1\le i< j\le r}  d_H(S_i,S_j)
\end{equation}
to denote the sum of all pairwise
Hamming distances.
(In Section~\ref{modeling}, we will discuss some issues
with the latter formulation.)

A \emph{feedback vertex set} of a graph \(G\) is any subset \(S \subseteq V(G)\)
of the vertex set of \(G\) such that the graph \(G-S\) obtained by deleting the
vertices in \(S\) is a \emph{forest}; that is, contains no cycle. 

\optprobA
	{\FVS}
	{A graph $G$.}
	{A feedback vertex set of $G$ of the smallest size.}

More
generally, a \emph{hitting set} of a collection \F of subsets of a universe
\(U\) is any subset \(S \subseteq U\) such that every set in the family \F has a
non-empty intersection with \(S\). For a fixed positive integer \(d\) the \dHS
problem asks for a hitting set of the smallest size of a family \F of
\(d\)-sized subsets of a finite universe \(U\):

\optprobA{\dHS}{A finite universe \(U\) and a family \F of subsets of \(U\),
  each of size at most \(d\).}{A hitting set \(S\) of \F of the smallest size.}

Observe that both \VC and \FVS are special cases of finding a smallest hitting
set for a family of subsets. \VC is also an instance of \dHS, with \(d = 2\):
the universe \(U\) is the set of vertices of the input graph and the family \F
consists of all sets \(\{v,w\}\) where \(vw\) is an edge in \(G\). There is no
obvious way to model \FVS as a \dHS instance, however, because the cycles in the
input graph are not necessarily of the same size.

In this work, we consider the following problems in the \textsc{Diverse $X$} paradigm.
Using $\sumdiversity$ as the diversity measure, we consider \rDkdHS and \rDkFVS,
where $X$ is \dHS and \FVS, respectively.
Using $\mindiversity$ as the diversity measure, we consider \rMDkdHS and \rMDkFVS,
where $X$ is \dHS and \FVS, respectively.

In each case we show that the problem is fixed-parameter tractable (\FPT), 
with the following running times:

\begin{Theorem}\label{thm:rDkdHS_is_FPT}
  \rDkdHS can be solved in time $r^{2}d^{kr}\cdot |U|^{O(1)}$.
\end{Theorem}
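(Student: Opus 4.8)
The plan is to combine a bounded-search-tree enumeration of ``base'' solutions with the network-flow routine advertised in the abstract, which optimally extends a fixed tuple of base solutions. First I would enumerate, using the standard branching for \dHS, the family $\mathcal{H}$ of all \emph{minimal} hitting sets of $\F$ of size at most $k$: repeatedly pick a set $F \in \F$ that is not yet hit and branch into the at most $d$ ways of adding one of its elements to the partial solution, cutting off any branch whose partial solution exceeds size $k$. The search tree has depth at most $k$ and branching factor at most $d$, so $|\mathcal{H}| \le d^{k}$ and the enumeration runs in time $d^{k}\cdot|U|^{O(1)}$. The point of enumerating only minimal solutions is the following reduction: every hitting set $S$ with $|S|\le k$ contains some $H \in \mathcal{H}$ (shrink $S$ to a minimal subset that is still a hitting set), and conversely any superset $S\supseteq H$ of a hitting set $H$ is again a hitting set. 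Hence a collection $S_{1},\dots,S_{r}$ of hitting sets of size at most $k$ is exactly a collection in which each $S_{i}$ is obtained from some $H_{i}\in\mathcal{H}$ by adding at most $k-|H_{i}|$ arbitrary elements of $U$.

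Next I would \emph{guess} the tuple $(H_{1},\dots,H_{r})\in\mathcal{H}^{r}$ of base solutions underlying an optimal diverse collection; there are at most $(d^{k})^{r}=d^{kr}$ such ordered tuples (repetitions allowed, since two output solutions may extend the same minimal hitting set). For each guess it remains to choose extensions $S_{i}\supseteq H_{i}$ with $|S_{i}|\le k$ maximising $\sumdiversity$. The observation that makes this tractable is that the objective is \emph{separable over the universe}: writing $a_{u}=|\{i : u\in S_{i}\}|$ for the number of chosen solutions containing element $u$, one has $\sumdiversity(S_{1},\dots,S_{r})=\sum_{u\in U} a_{u}(r-a_{u})$, since $u$ contributes to the Hamming distance of exactly the $a_{u}(r-a_{u})$ pairs that split on $u$. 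Each term $a_{u}(r-a_{u})$ is a concave function of $a_{u}$ with strictly decreasing increments $r-2a_{u}-1$, while the forced memberships ($u\in H_{i}$) together with the per-solution budgets $k-|H_{i}|$ form a transportation-type constraint. This is precisely the setting handled by the promised \maxcostflow formulation: I would build a network with a source arc of capacity $k-|H_{i}|$ into each solution node, an arc from each solution node to each element node, and element-to-sink arcs split into parallel unit-capacity arcs whose decreasing marginal gains encode the concave contribution $a_{u}(r-a_{u})$ above the baseline already forced by the $H_{i}$. An integral maximum-cost flow then yields an optimal choice of extensions, and I would take the best value over all $d^{kr}$ guesses and compare it with $t$.

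The main obstacle is the correctness and analysis of the flow step, so that is where I expect to spend most of the effort. I would need to verify that (i) the concave marginals are encoded so that a maximum-cost flow never prefers a lower-gain unit over a higher-gain one for the same element, which guarantees that an optimal flow is realisable as actual sets; (ii) the baseline contribution from the forced elements of the $H_{i}$ is correctly offset in the arc costs and capacities; and (iii) the optimum flow can be taken integral, so that the (polynomially computable) optimum coincides with the best integral extension. Granting the polynomial-time solvability of maximum-cost flow on this network, which together with evaluating the objective we absorb into the stated $r^{2}\cdot|U|^{O(1)}$ overhead, the overall running time is $d^{k}\cdot|U|^{O(1)}$ for the enumeration plus $d^{kr}\cdot r^{2}\cdot|U|^{O(1)}$ for the guesses and flows, i.e.\ $r^{2}d^{kr}\cdot|U|^{O(1)}$ as claimed.
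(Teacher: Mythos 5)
Your proposal is correct and takes essentially the same route as the paper: enumerate the at most $d^k$ minimal hitting sets by depth-$k$, branching-factor-$d$ search (the paper's Lemma~\ref{lemma:minimaldHS}), guess one of the $d^{kr}$ ordered $r$-tuples of base solutions, and optimally augment each tuple via a maximum-cost flow in which the concave per-element contribution $y(r-y)$ is encoded by parallel unit-capacity arcs with decreasing marginal costs (the paper's Theorem~\ref{th:optaug}). The verification steps you flag as the main remaining effort---that an optimal flow saturates higher-gain parallel arcs first, and that the optimum can be taken integral---are precisely what the paper settles with its greedy longest-augmenting-path argument, so nothing essential is missing.
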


\begin{Theorem}\label{thm:fvs}
 \rDkFVS can be solved in time $2^{7kr}\cdot n^{O(1)}$.
\end{Theorem}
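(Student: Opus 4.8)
The plan is to combine a bounded-search-tree procedure for enumerating a sufficient set of ``candidate'' feedback vertex sets with the problem-independent network-flow machinery promised in the abstract. The overall strategy mirrors the one for \rDkdHS (Theorem~\ref{thm:rDkdHS_is_FPT}), but the enumeration step is substantially more delicate because, unlike hitting sets for bounded-size sets, feedback vertex sets cannot be built by greedily branching on a small ``conflict'' witness of constant size.

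First I would argue that it suffices to restrict attention to a family of \emph{relevant} feedback vertex sets whose size is bounded by a function of $k$ and $r$ alone. The key observation is that a maximally diverse collection $S_1, \dots, S_r$ can always be chosen so that every vertex appearing in some $S_i$ lies in the union $\bigcup_i S_i$, and this union has size at most $kr$. Hence it is enough to guess the set $W$ of at most $kr$ vertices that the whole collection lives on, and then to find, for each solution slot, a feedback vertex set of size at most $k$ that is a subset of $W$. The hard part will be showing that we can enumerate enough candidate feedback vertex sets inside a guessed $W$ without paying more than $2^{O(kr)}$; the natural tool here is the classical bounded-search-tree FVS branching, which finds a single feedback vertex set of size at most $k$ in time $O^\star(2^{O(k)})$ by repeatedly locating a short cycle or a high-degree vertex and branching on which of the (boundedly many) candidate vertices to delete. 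Running this branching to completion yields a search tree with $2^{O(k)}$ leaves, each corresponding to one feedback vertex set of size at most $k$; collecting the sets at all leaves over all slots gives a family of $2^{O(k)}$ base solutions.

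The second phase invokes the network-flow subroutine, which takes a set of base solutions and computes a collection of $r$ solutions maximizing $\sumdiversity$. To make this work I must ensure that the set of base solutions enumerated in phase one is \emph{rich enough} that an optimal diverse collection can be assembled from them; this is exactly the role of the ``representative'' or ``replacement'' argument, showing that any vertex used by an optimal collection can be swapped for one discovered by the branching without decreasing diversity or violating the forest condition. Once richness is established, we feed the $2^{O(k)}$ base solutions into the flow formulation; since the flow routine runs in polynomial time in the number of base solutions and the universe size, and the number of base solutions is $2^{O(k)}$ while we must combine them into $r$ slots, the total running time collapses to $2^{O(kr)} \cdot n^{O(1)}$, matching the claimed $2^{7kr} \cdot n^{O(1)}$ after tracking the constants in the FVS branching.

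The main obstacle, and the place where the $\FVS$ proof genuinely diverges from the $\dHS$ proof, is the richness/replacement argument: for $\dHS$ the bounded size of each set makes it easy to argue that branching on every element of some unhit set captures all relevant solutions, whereas for $\FVS$ the cycles can be arbitrarily long, so one cannot simply branch on ``all vertices of a witness cycle.'' I expect the resolution to rely on the standard FVS reduction rules (bypassing degree-two vertices, deleting degree-at-most-one vertices, handling short cycles and multi-edges) to guarantee that whenever the graph is irreducible there is a constant number of candidate vertices to branch on, together with an argument that these reductions preserve the existence of a diverse collection realizing the optimal $\sumdiversity$. Verifying that the reductions and the diversity objective interact correctly—so that a diverse family for the reduced instance lifts to one for the original—is where I would spend most of the care.
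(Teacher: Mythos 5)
Your proposal contains a genuine gap, and it sits exactly at the step you defer to the end. You propose to collect \emph{concrete} feedback vertex sets at the leaves of a bounded search tree and then feed these $2^{O(k)}$ base solutions into the augmentation routine of Theorem~\ref{th:optaug}, relying on a ``richness/replacement'' argument to be supplied. No such argument can exist for concrete base solutions. Consider the paper's own obstruction: a graph consisting of $k$ disjoint cycles of length $n/k$. Every feedback vertex set of size at most $k$ is minimal, has size exactly $k$, and picks one vertex per cycle, so there are $(n/k)^k$ of them. Because each base solution already has size $k$, the augmentation step can add nothing, and the diversity achievable is determined entirely by which base solutions were enumerated; the optimal collection consists of $r$ pairwise disjoint solutions, so the enumerated family would have to contain that collection itself (indeed, for the replacement argument one needs, for every optimal $(T_1^*,\dots,T_r^*)$, base sets $S_i \subseteq T_i^*$, which here forces $S_i = T_i^*$). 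Hence any sufficiently rich family of concrete solutions has size $n^{\Omega(k)}$, and no \FPT-size family of leaves of a search tree can be rich enough. Your auxiliary idea of guessing the union $W$ of the collection fails independently: the stated observation about $\bigcup_i S_i$ is vacuous, and iterating over candidate sets $W$ of size $kr$ costs $\binom{n}{kr} = n^{\Theta(kr)}$, which is \XP, not \FPT, in $k+r$.

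What the paper does instead, and what is missing from your proposal, is a \emph{compressed} enumeration: Lemma~\ref{lem:eqFVS} produces at most $2^{7k}$ ``classes of FVS solutions'' $(S,\ell)$, where each $\ell(v)$ is a set of interchangeable vertices (degree-two vertices merged by the reduction rules, all hitting the same cycles), and a solution is obtained by choosing \emph{one} vertex from each $\ell(v)$. In the disjoint-cycles example there is essentially one class, whose sets $\ell(v)$ are the whole cycles, and the exponentially many concrete solutions are recovered by the choice of representatives. This forces the second change you also miss: the problem-independent flow of Section~\ref{sec:augmentation} cannot be used as-is, because it only expresses ``$T_i \supseteq S_i$'' constraints, not ``pick at least one vertex from each $\ell_i(v)$.'' The paper modifies the network (Figure~\ref{fig:network2}) with an extra layer of nodes, one per set $\ell_i(v)$, carrying a forced unit flow, so that the flow simultaneously selects representatives and augments. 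Your instinct that the degree-two reduction rules are the key tool is correct, but their output must be the classes themselves, handed to a modified flow model---not concrete solutions handed to the unmodified one.
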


\begin{Theorem}\label{thm:rMDkdHS_is_FPT}
  \rMDkdHS can be solved in time
  \begin{itemize}
  \item $2^{kr^2}\cdot (kr)^{O(1)}$
    if $|U| < kr$ and
  \item $d^{kr}\cdot |U|^{O(1)}$ otherwise.
  \end{itemize}
\end{Theorem}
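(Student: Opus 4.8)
The plan is to split on the size of the universe exactly as the statement suggests: handle the small-universe case by brute force, and devote the real work to the large-universe case.

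First, suppose $|U| < kr$. Then the whole search space is tiny: there are at most $2^{|U|} < 2^{kr}$ subsets of $U$, hence fewer than $2^{|U|r} < 2^{kr^2}$ ordered $r$-tuples of subsets. I would enumerate all such tuples $(S_1,\ldots,S_r)$, discard any in which some $S_i$ fails to be a hitting set of \F or has size more than $k$, and report a survivor with $\mindiversity(S_1,\ldots,S_r)\ge t$ if one exists. After first replacing \F by its inclusion-wise minimal members, each tuple is checked for validity and diversity in $(kr)^{O(1)}$ time, giving the claimed bound $2^{kr^2}\cdot(kr)^{O(1)}$.

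Now suppose $|U|\ge kr$. The plan is to decouple each solution into a small hitting \emph{core} plus arbitrary \emph{padding}. Using the standard bounded-search-tree algorithm for \dHS --- repeatedly pick a not-yet-hit set $F\in\F$ (of size at most $d$), branch on which of its at most $d$ elements to add, and prune once the partial set exceeds size $k$ --- I would compute a family $\mathcal{M}$ of at most $d^k$ hitting sets that is guaranteed to contain every \emph{minimal} hitting set of size at most $k$. I would then iterate over all $r$-tuples $(T_1,\ldots,T_r)\in\mathcal{M}^r$, of which there are at most $d^{kr}$, treating each as a candidate choice of cores. The key observation --- and the reason $\mindiversity$ is simpler than the sum here --- is that, given fixed cores, the optimal padding has a closed form: since every superset of a hitting set is again a hitting set, adding to $S_i$ an element lying in no other $S_j$ raises $\Hammingdistance(S_i,S_j)$ by one for each $j\ne i$, while reusing an element can only lower a distance; so padding every core up to size $k$ with \emph{fresh, mutually disjoint} elements simultaneously maximizes all pairwise distances, giving
\[ \Hammingdistance(S_i,S_j)=\Hammingdistance(T_i,T_j)+(k-|T_i|)+(k-|T_j|). \]
This padding is feasible precisely because $|U|\ge kr$ leaves at least $kr-\sum_i|T_i|\ge\sum_i(k-|T_i|)$ unused elements --- which is exactly where the case split comes from. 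Hence the best attainable minimum diversity from cores $(T_1,\ldots,T_r)$ is $\min_{i<j}\bigl[\Hammingdistance(T_i,T_j)+2k-|T_i|-|T_j|\bigr]$, computable in $(kr)^{O(1)}$ time per tuple, for a total of $d^{kr}\cdot|U|^{O(1)}$.

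It remains to prove correctness, which is where I expect the only genuine subtlety. I would argue a two-sided bound via exchange. Given any optimal collection $(S_1^\ast,\ldots,S_r^\ast)$, pick inside each $S_i^\ast$ a minimal hitting set $T_i^\ast$ (so $T_i^\ast\in\mathcal{M}$); from $S_i^\ast\triangle S_j^\ast\subseteq(T_i^\ast\triangle T_j^\ast)\cup(S_i^\ast\setminus T_i^\ast)\cup(S_j^\ast\setminus T_j^\ast)$ every pairwise distance of the optimum is at most the closed-form value for $(T_1^\ast,\ldots,T_r^\ast)$, so the procedure's best value is at least the optimum. Conversely, the fresh-padding construction realizes every closed-form value by a valid collection, so it is at most the optimum. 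The main things to nail down are therefore this inequality chain --- in particular, that restricting cores to \emph{minimal} hitting sets loses nothing, and that the fresh padding realizing the closed form always fits into the universe when $|U|\ge kr$.
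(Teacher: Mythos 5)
Your proposal is correct and takes essentially the same approach as the paper's proof: brute-force enumeration of all $r$-tuples when $|U|<kr$, and otherwise enumeration of the at most $d^{kr}$ $r$-tuples of (minimal) hitting-set cores followed by padding each core up to size $k$ with fresh, pairwise-distinct elements, which maximizes every pairwise Hamming distance simultaneously and hence the minimum. Your explicit two-sided exchange argument, via the inclusion $S_i^\ast\triangle S_j^\ast\subseteq(T_i^\ast\triangle T_j^\ast)\cup(S_i^\ast\setminus T_i^\ast)\cup(S_j^\ast\setminus T_j^\ast)$, only spells out carefully what the paper treats as immediate.
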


\begin{Theorem}\label{thm:fvsminmax}
  \rMDkFVS can be solved in time $2^{kr\cdot \max (r,7+\log_2(kr))}\cdot (nr)^{O(1)}$.
\end{Theorem}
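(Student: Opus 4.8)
The plan is to reduce \rMDkFVS to \rDkFVS via a standard trick for converting between the \mindiversity and \sumdiversity objectives, then invoke Theorem~\ref{thm:fvs} as a black box, adjusting the running time to account for the overhead of the reduction. The key observation is that the minimum pairwise Hamming distance objective is harder to optimize directly than the sum objective, because a single flow-based computation naturally maximizes a sum but does not control the worst pair. The workaround is to \emph{guess} the structure that certifies the min-diversity: in particular, I expect the algorithm to enumerate over a bounded amount of information that pins down how the $r$ solutions relate to one another on the relevant vertices.

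First I would collect a set of ``base'' feedback vertex sets. As in the proof of Theorem~\ref{thm:fvs}, an FVS of size at most $k$ can be found and, more importantly, the relevant solutions live within a bounded-size universe once we fix which $\le kr$ vertices can participate across all $r$ solutions. Concretely, I would argue that it suffices to consider collections whose union lies in a set of at most $kr$ vertices, and then the whole problem becomes one of choosing, for each of these $\le kr$ coordinates, how it is distributed among the $r$ solutions. This is where the two regimes of the running time originate: when the ambient set of candidate vertices is small (of size $< kr$), brute force over the $2^{kr \cdot r}$ membership patterns dominates, giving the $2^{kr^2}$-type factor; when it is large, one instead relies on the branching/enumeration used for \FVS itself, contributing the $7+\log_2(kr)$ term in the exponent, mirroring exactly the structure of Theorem~\ref{thm:rMDkdHS_is_FPT}.

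Concretely, the reduction from the min-measure to the sum-measure proceeds by binary search (or direct enumeration) on the threshold value $t$ for \mindiversity, combined with the observation that maximizing the minimum pairwise distance can be handled by fixing, for each pair $(i,j)$, a lower bound on $d_H(S_i,S_j)$ and then checking feasibility through the network-flow machinery that underlies Theorem~\ref{thm:fvs}. Since each pairwise Hamming distance is an integer in $[0, 2k]$, there are only polynomially many candidate thresholds to test, and the $(nr)^{O(1)}$ factor absorbs this search. For each fixed threshold I would invoke the sum-based flow formulation with per-pair distance constraints imposed as capacity or lower-bound conditions, so that a feasible flow of the required value exists if and only if a collection meeting the min-diversity threshold exists.

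The main obstacle, as I see it, is controlling the blow-up in the exponent so that it takes exactly the stated form $2^{kr\cdot \max(r,\,7+\log_2(kr))}$. The $\max$ with two arguments signals that the analysis splits into the two universe-size regimes exactly as in Theorem~\ref{thm:rMDkdHS_is_FPT}, and the delicate part will be ensuring that imposing the per-pair lower bounds on Hamming distances does not destroy the flow structure that makes the sum-diversity computation efficient—unlike a pure sum objective, which decomposes coordinatewise, a minimum constraint couples the pairs together. I expect the resolution to be that once the bounded universe of $\le kr$ candidate vertices is fixed, one can afford to enumerate the full membership pattern of each vertex across the $r$ solutions (costing the $2^{kr\cdot r}$ factor), and \emph{for that fixed pattern} the min-diversity is simply read off, so no flow coupling is needed in the small-universe regime; in the large-universe regime the flow formulation is applied per fixed threshold, and the logarithmic term comes from the branching cost of \FVS. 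Verifying that these two regimes glue together to give a single clean bound, and that correctness holds across the threshold search, is the step I would spend the most care on.
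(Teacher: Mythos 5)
There is a genuine gap in the core of your proposal for the large-universe regime. You propose to handle $\mindiversity$ by enumerating a threshold $t$ and imposing per-pair lower bounds $d_H(T_i,T_j)\ge t$ inside the network-flow machinery of Theorem~\ref{thm:fvs}. This cannot work as stated: the flow formulation's cost and feasibility structure depend only on the multiplicities $y_j=\sum_i x_{ij}$ (how many of the $r$ sets contain element $j$), and two collections with identical multiplicity vectors can have wildly different minimum pairwise distances (e.g., $r/2$ copies each of two disjoint sets versus $r$ pairwise-distinct sets can produce the same $y_j$'s). Per-pair distance constraints couple the assignment variables across pairs of solutions in a way that is not expressible as arc capacities or lower bounds in that bipartite network; you flag this difficulty yourself but never resolve it, and the resolution you gesture at (fixing a $\le kr$-vertex universe) is only available when $n<kr$ --- when $n\ge kr$ you cannot afford to enumerate \emph{which} $kr$ vertices the solutions live in. Your small-universe case (brute force over all $r$-tuples of feedback vertex sets when $n<kr$, giving $2^{kr\cdot r}$) does match the paper; it is the complementary case that is broken.

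What the paper actually does when $n\ge kr$ uses no flow and no threshold search at all. It reuses Lemma~\ref{lem:eqFVS} to enumerate $(2^{7k})^r$ $r$-tuples of solution classes $(S_i,\ell_i)$, and then exploits abundance of fresh vertices: for each label list $\ell_i(v)$ with $|\ell_i(v)|<kr$ it brute-forces the choice of representative (at most $(kr)^{kr}=2^{kr\log_2(kr)}$ combinations in total --- this, and not the FVS branching as you claim, is the source of the $\log_2(kr)$ term; the branching only contributes the $7$); for lists of size $\ge kr$, and for padding each $T_i$ up to size $k$, it greedily picks elements distinct from everything chosen so far. The key observation making greed correct for the \emph{min} objective is that each fresh element increases the Hamming distance from $T_i$ to \emph{every} other $T_j$ by exactly $1$, which is the best any choice could do, so these choices simultaneously maximize all pairwise distances and hence their minimum. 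Without this mechanism --- enumerate only inside small lists, go greedy wherever the universe is plentiful --- your proof does not go through, and the flow-based substitute you propose fails for the structural reason above.
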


Defining the diverse versions \rDkVC and \rMDkVC of \VC in a similar manner as
above, we get

\begin{Corollary}\label{cor:diverseVC_are_FPT}
  \rDkVC can be solved in time $2^{kr}\cdot n^{O(1)}$. \rMDkVC can be solved in
  time
  \begin{itemize}
  \item $2^{kr^2}\cdot (kr)^{O(1)}$ if $n < kr$ and
  \item $2^{kr}\cdot n^{O(1)}$ otherwise.
  \end{itemize}
\end{Corollary}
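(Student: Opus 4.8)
The plan is to derive both running times as direct specializations of Theorems~\ref{thm:rDkdHS_is_FPT} and~\ref{thm:rMDkdHS_is_FPT} to the case $d = 2$, exploiting the fact (already recorded in the excerpt) that \VC is precisely \dHS with $d = 2$. Concretely, given a graph $G$ I take the universe to be $U = V(G)$, so that $|U| = n$, and the family $\F = \{\,\{u,v\} : uv \in E(G)\,\}$, every member of which has size exactly two. Under this identification a vertex cover of $G$ is the same object as a hitting set of $\F$, the size of a solution and the number $r$ of solutions are unchanged, and the Hamming distance between two solutions is identical in the two formulations. Hence \rDkVC and \rMDkVC coincide with the $d = 2$ instances of \rDkdHS and \rMDkdHS, and it remains only to substitute $d = 2$ and $|U| = n$ into the two theorems.

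First I would handle \rDkVC. Plugging $d = 2$ and $|U| = n$ into Theorem~\ref{thm:rDkdHS_is_FPT} gives the bound $r^2 \cdot 2^{kr} \cdot n^{O(1)}$. The only difference from the claimed $2^{kr} \cdot n^{O(1)}$ is the leading factor $r^2$, which is polynomial in the parameter $r$ and is absorbed by the stated running-time bound. For \rMDkVC I substitute the same values into the two branches of Theorem~\ref{thm:rMDkdHS_is_FPT}: the condition $|U| < kr$ becomes $n < kr$, in which case the first branch gives $2^{kr^2} \cdot (kr)^{O(1)}$ verbatim (this branch does not depend on $d$); and in the complementary case $n \ge kr$ the second branch gives $d^{kr} \cdot |U|^{O(1)} = 2^{kr} \cdot n^{O(1)}$. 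These are exactly the two bounds asserted for \rMDkVC.

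The argument is almost entirely a matter of substitution, so there is no substantial obstacle; the only point needing care is the bookkeeping of the polynomial prefactors, in particular confirming that the $r^2$ overhead for \rDkVC is harmless and that the case distinction between $n < kr$ and $n \ge kr$ for \rMDkVC matches the distinction $|U| < kr$ versus $|U| \ge kr$ after setting $|U| = n$. All of the algorithmic content is inherited from Theorems~\ref{thm:rDkdHS_is_FPT} and~\ref{thm:rMDkdHS_is_FPT}.
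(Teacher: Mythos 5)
Your proposal is correct and matches the paper's (implicit) derivation exactly: the corollary is obtained by viewing \VC as \dHS with $d=2$ over the universe $U = V(G)$ and substituting $d=2$, $|U|=n$ into Theorems~\ref{thm:rDkdHS_is_FPT} and~\ref{thm:rMDkdHS_is_FPT}. Your extra care about the polynomial factor $r^2$ (absorbed into the $n^{O(1)}$ term, since the input/output size is at least $r$) is exactly the bookkeeping the paper silently performs.
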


\paragraph{Related Work.}
The parameterized complexity of finding a diverse collection of good-quality
solutions to algorithmic problems seems to be largely unexplored. To the best of
our knowledge, the only existing work in this area consists of: (i) a privately
circulated manuscript by Fellows~\cite{fellows2018diverseXparadigm} which
introduces the Diverse \(X\) Paradigm and makes a forceful case for its
relevance, and (ii) a manuscript by Baste et al.~\cite{arBaFeJaOlRo2019}
which applies the Diverse \(X\) Paradigm to
\emph{vertex-problems} with the \emph{treewidth} of the input graph as an extra
parameter. In this context a \emph{vertex-problem} is any problem in which the
input contains a graph \(G\) and the solution is some subset of the vertex set
of \(G\) which satisfies some problem-specific properties. Both \VC and \FVS are
vertex-problems in this sense, as are many other graph problems. The
\emph{treewidth} of a graph is, informally put, a measure of how tree-like the
graph is. See, e.g., \cite[Chapter~7]{CyFoKoLoMaPiPiSa2015} for an introduction
of the use of the treewidth of a graph as a parameter in designing \FPT
algorithms. The work by Baste et al.~\cite{arBaFeJaOlRo2019} shows how to
convert essentially any treewidth-based dynamic programming algorithm for
solving a vertex-problem, into an algorithm for computing a diverse set of \(r\)
solutions for the problem, with the diversity measure being the sum
$\sumdiversity$ of Hamming distances of the solutions. This latter algorithm is
\FPT in the combined parameter \((r, w)\) where \(w\) is the treewidth of the
input graph. As a special case, they obtain a running time of
$\Ocal((2^{k+2}(k+1))^r kr^2n)$ for \rDkVC.
Further, they show that
the \(r\)-\textsc{Diverse} versions (i.e., where the diversity measure is
$\sumdiversity$) of a handful of problems have polynomial kernels. In
particular, they show that \rDkVC has a kernel with \(\Ocal(k(k + r))\)
vertices, and that \rDkdHS has a kernel with a universe size of
\(\Ocal(k^{d} + kr)\).

\paragraph{Organization of the rest of the paper.}
In \autoref{sec:prelims} we list some definitions which we use in the rest of
the paper. In \autoref{sec:framework} we describe a generic framework which can
be used for computing solution families of maximum diversity for a variety of
problems whose solutions form subsets of some finite set. We prove
\autoref{thm:rDkdHS_is_FPT} in Section~\ref{sec:diversedHS} and
\autoref{thm:fvs} in \autoref{Diverse_FVS}. In \autoref{modeling}
we discuss some potential pitfalls in using \(\sumdiversity\) as a measure of
diversity. In \autoref{max} we prove \autoref{thm:rMDkdHS_is_FPT} and
\autoref{thm:fvsminmax}. We conclude in \autoref{conclusion}.

\section{Preliminaries}\label{sec:prelims}

Given two integers $p$ and $q$, we denote by $[p,q]$ the set of all integers $r$
such that $p \leq r \leq q$ holds. Given a graph $G$, we denote by $V(G)$ (resp.
$E(G)$) the set of \emph{vertices} (resp. \emph{edges}) of $G$. For a subset
$S \subset V(G)$ we use $G[S]$ to denote the subgraph of $G$ induced by $S$, and
$G \setminus S$ for the graph $G[V(G) \setminus S]$. A set $S \subseteq V(G)$ is
a vertex cover (resp. a feedback vertex set) if $G\setminus S$ has no edge
(resp. no cycle). Given a graph $G$ and a vertex $v$ such that $v$ has exactly
two neighbors, say $w$ and $w'$, \emph{contracting} $v$ consists in removing the
edges $\{v,w\}$ and $\{v,w'\}$, removing $v$ and adding the edge $\{w,w'\}$.
Given a graph $G$ and a vertex $v\in V(G)$, we denote by $\delta_G(v)$ the
\emph{degree} of $v$ in $G$. For two vertices \(u,v\) in a connected graph \(G\)
we use \(\texttt{dist}_T(u,v)\) to denote the \emph{distance} between \(u\) and
\(v\) in \(G\), which is the length of a shortest path in \(G\) between \(u\)
and \(v\).

A \emph{deepest leaf} in a tree $T$ is a vertex $v \in V(T)$ such that there
exists a root $r \in V(T)$ satisfying
$\texttt{dist}_T(r,v) = \max_{u \in V(T)} \texttt{dist}_T(r,u)$. A \emph{deepest
  leaf} in a forest $F$ is a deepest leaf in some connected component of $F$.
A deepest leaf $v$ has the property that there is another leaf in the
tree at distance at most 2 from $v$ unless $v$ is an isolated vertex
or $v$'s neighbor has degree~2.

The objective function $\sumdiversity$ in
\eqref{eq:objectivetot} has an alternative representation in
terms of frequencies of occurrence~\cite{arBaFeJaOlRo2019}:
If $y_v$ is the number of sets of $\{S_1, \ldots, S_r\}$ in which $v$ appears,
then
\begin{equation}
  \label{eq:obj-alternative}
  \sumdiversity(S_1,\ldots, S_r) = \sum_{v \in U} y_v(r-y_v).
\end{equation}

\paragraph{Auxiliary problems.}
We define two auxiliary problems that we will use in some of the algorithms 
presented in \autoref{sec:framework}.
In the \maxcostflow problem, we are given a directed graph $G$, a \emph{target} $d \in \bR^+$,
a \emph{source vertex} $s \in V(G)$, a \emph{sink vertex} $t \in V(G)$,
and for each edge $(u, v) \in E(G)$, 
a \emph{capacity} $c(u, v) > 0$, and a \emph{cost} $a(u, v)$.
A \emph{$(s, t)$-flow}, or simply \emph{flow} in $G$ is a function $f \colon E(G) \to \bR$, such that 
for each $(u, v) \in E(G)$, $f(u, v) \le c(u, v)$, and for each vertex
$v \in V(G) \setminus \{s, t\}$, $\sum_{(u, v) \in E(G)} f(u, v) = \sum_{(v, u) \in E(G)} f(v, u)$.
The \emph{value} of the flow $f$ is $\sum_{(s, u) \in E(G)} f(s, u)$ and the 
\emph{cost} of $f$ is $\sum_{(u, v) \in E(G)} f(u, v) \cdot a(u, v)$.
The objective of the \maxcostflow problem is to find the maximum cost $(s, t)$-flow of value $d$.

The second problem is the \MWBM problem. Here, we are given an undirected edge-weighted graph $G$, 
and for each vertex $v \in V(G)$, a \emph{supply} $b(v)$. The goal is to find a set of edges 
$M \subseteq E(G)$ of maximum total weight such that each vertex $v \in V(G)$ is incident with at most $b(v)$ edges in $M$.

\section{A Framework for Maximally Diverse Solutions}
\label{sec:framework}

In this section we describe a framework for computing solution families of
maximum diversity for a variety of hitting set problems. This framework requires
that 
the solutions form a family of subsets of a ground set $U$ which is upward
closed:
Any superset $T\supseteq S$ of a solution $S$ is also a solution.

The approach is as follows:
In a first phase, we enumerate the class $\mathcal{S}$ of all 
\emph{minimal solutions} of size at most~$k$. (A larger class $\mathcal{S}$ is also
fine as long as it is guaranteed to contain all minimal solutions of
size at most $k$.)
Then we form all $r$-tuples $(S_1,\ldots,S_r) \in
\mathcal{S}^k$. For each such family $(S_1,\ldots,S_r)$
, we try to \emph{augment}
it to a family
 $(T_1,\ldots,T_r)$ under the constraints $T_i\supseteq S_i$ and
 $|T_i|\le k$, for each $i \in [1,r]$, in such a way that
$\sumdiversity(T_1,\ldots,T_r)$ is maximized.

For this augmentation problem, we propose a network flow model that computes an
optimal augmentation in polynomial time, see Section~\ref{sec:augmentation}.
This has to be repeated for each family, $O(|\mathcal{S}|^r)$ times. The first
step, the generation of $\mathcal{S}$, is problem-specific.
Section~\ref{sec:diversedHS} shows how to solve it for \dHS. In
Section~\ref{Diverse_FVS}, 
we will adapt our approach to deal with \FVS.

\subsection{Optimal Augmentation}
\label{sec:augmentation}

Given a universe $U$ and a set $\mathcal{S}$ of subsets of $U$, the problem
$\texttt{diverse}_{r,k}(\mathcal{S})$ consists in finding an
$r$-tuple $(S_1,\ldots, S_r)$ that maximizes
$\sumdiversity(S_1, \ldots, S_r)$, over all $r$-tuples
$(S_1, \ldots, S_r)$ such that for each $i \in [1,r]$, $|S_i| \le k$ and
there exists $S \in \mathcal{S}$ such that
$S \subseteq S_i \subseteq U$.

\begin{Theorem}
  \label{th:optaug}
  Let $U$ be a finite universe, $r$ and $k$ be two integers and  $\mathcal{S}$ be a set of $s$ subsets of $U$.
  $\textup{\texttt{diverse}}_{r,k}(\mathcal{S})$ can be solved in time $r^2s^{r} \cdot |U|^{O(1)}$.
\end{Theorem}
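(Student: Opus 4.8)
The plan is to reduce $\texttt{diverse}_{r,k}(\mathcal{S})$ to a polynomial number of instances of an \emph{augmentation} subproblem, and to solve each such instance by a single \maxcostflow computation. First I would observe that in any feasible $r$-tuple $(S_1,\ldots,S_r)$ each $S_i$ must contain some member $B_i\in\mathcal{S}$. So I would iterate over all $s^r$ choices of base tuples $(B_1,\ldots,B_r)\in\mathcal{S}^r$ (discarding those with $|B_i|>k$), and for each fixed base tuple solve the problem of choosing $T_i\supseteq B_i$ with $|T_i|\le k$ so as to maximise $\sumdiversity(T_1,\ldots,T_r)$. Returning the best solution over all base tuples then solves $\texttt{diverse}_{r,k}(\mathcal{S})$, since every feasible tuple arises this way for at least one base choice.

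For the augmentation subproblem with a fixed base tuple I would use the frequency representation~\eqref{eq:obj-alternative}, namely $\sumdiversity=\sum_{v}y_v(r-y_v)$ where $y_v$ counts the sets containing $v$. Writing $b_v:=|\{i:v\in B_i\}|$ for the forced multiplicity and $m_v\ge 0$ for the number of extra copies added, we have $y_v=b_v+m_v$, and this enters the objective through the concave map $y\mapsto y(r-y)$, whose successive marginal gains $r-1-2(b_v+j)$, for $j=0,1,\ldots$, are strictly decreasing. This concavity is exactly what makes a flow formulation possible.

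Then I would build the following network. Take a source $s$, a sink $t$, a node for each set index $i$, and a node for each element $v\in U$. I add an edge $s\to i$ of capacity $k-|B_i|$ and cost $0$ (the budget of $T_i$), a ``waste'' edge $i\to t$ of capacity $k-|B_i|$ and cost $0$ (to absorb unused budget), a unit-capacity cost-$0$ edge $i\to v$ for every pair with $v\notin B_i$ (the decision to add $v$ to $T_i$), and, for each $v$, a family of $r-b_v$ parallel unit-capacity edges $v\to t$ whose costs are the decreasing marginals $r-1-2(b_v+j)$. Setting the flow value to $d:=\sum_i(k-|B_i|)$ forces each $s\to i$ edge to saturate, so the flow out of $i$ splits between genuine additions and waste, and a flow of value $d$ corresponds exactly to a feasible augmentation. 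Because the flow is a maximum-cost one and the marginals along the $v\to t$ edges decrease, the flow through each $v$ uses its most profitable edges first, so the cost of the flow equals $\sumdiversity(T_1,\ldots,T_r)$ minus the fixed constant $\sum_v b_v(r-b_v)$; maximising the flow cost thus maximises the diversity, and the augmented sets are read off from the $i\to v$ edges carrying flow.

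The step I expect to be the main obstacle is proving this last correspondence rigorously. I would argue that, since the network is of transportation type, its constraint matrix is totally unimodular, so an optimal flow may be taken integral; and that the monotonicity of the marginal costs guarantees the parallel $v\to t$ edges are saturated in decreasing order of cost, so that an integral max-cost flow realises precisely the optimum of the concave augmentation objective. Granting the polynomial running time of \maxcostflow on a network with $O(r|U|)$ edges and flow value $O(rk)$ (which, taking $k\le|U|$ without loss of generality, I would bound by $r^2\cdot|U|^{O(1)}$), multiplying by the $s^r$ base tuples yields the claimed $r^2s^r\cdot|U|^{O(1)}$ bound.
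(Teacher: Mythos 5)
Your proposal is correct and takes essentially the same approach as the paper: enumerate all $s^r$ base tuples in $\mathcal{S}^r$, then solve each augmentation problem as a maximum-cost flow on a source--sets--elements--sink network, encoding the concave objective $\sum_{v} y_v(r-y_v)$ by parallel unit-capacity arcs with decreasing marginal costs and arguing via integrality plus the ordered saturation of those arcs. The only differences are technical rather than conceptual: the paper keeps the forced base elements as fixed-flow arcs and lets the flow value be free, solving by a greedy longest-augmenting-path scheme in $O(kr^2n)$ time per tuple, whereas you remove the base elements (shifting the marginals by $b_v$ and absorbing a constant offset) and pin the flow value with zero-cost waste arcs so that a black-box \maxcostflow computation applies; both give the claimed $r^2s^r\cdot |U|^{O(1)}$ bound.
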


\begin{proof}
The algorithm that proves Theorem~\ref{th:optaug} starts by
enumerating all
\(r\)-tuples $(S_1,S_2,\ldots,S_r) \in \mathcal{S}^r$ of elements from
$\mathcal{S}$.
For each of these \(s^{r}\) \(r\)-tuples we try to augment each $S_i$, using
elements of $U$, in such a way that the diversity $d$ of the resulting tuple
$(T_1,\ldots,T_r)$ is maximized and such that for each $i \in [1,r]$,
$S_i \subseteq T_i \subseteq U$ and $|T_i|\le k$. It is clear that this
algorithm will find the solution to $\texttt{diverse}_{r,k}(\mathcal{S})$.

We show how to model this problem as a maximum-cost network flow problem with
piecewise linear concave costs.
This problem can be solved in polynomial time.
(See for example \cite{Tarjan} for basic notions about network flows.)

Without loss of generality, let \(U = \{1, 2, \dotsc, n\}\). We use a variable
$0\le x_{ij}\le1$ to decide whether element $j$ of \(U\) should belong to set
$T_i$. In an optimal flow, these values are integral. Some of these variables
are already fixed because $T_i$ must contain~$S_i$:
\begin{equation}
  \label{eq:given}
  x_{ij}=1 \text{ for }j\in S_i
\end{equation}
The size of $T_i$ must not exceed $k$:
\begin{equation}
  \label{eq:bound}
  \sum_{j=1}^n x_{ij}\le k, \text{ for } i=1,\ldots,r
\end{equation}
Finally, we can express the number $y_j$ of sets $T_i$ in which an element $j$
occurs:
\begin{equation}
  \label{eq:multiplicity}
 y_j =  \sum_{i=1}^r x_{ij}, \text{ for } j=1,\ldots,n
\end{equation}
These variables $y_j$ are
the variables in terms of which the objective
function
 \eqref{eq:obj-alternative}
 is expressed:
\begin{equation}
  \label{eq:obj}
 \text{maximize } \sum_{j=1}^n y_j(r-y_j)
\end{equation}
These constraints can be modeled by a network as shown in
Figure~\ref{fig:network}. There are nodes $T_i$ representing the sets $T_i$ and
a node $V_j$ for each element \(j \in U\). In addition, there is a source $s$
and a sink~$t$. The arcs emanating from $s$ have capacity $k$. Together with the
flow conservation equations at the nodes $T_i$, this models the
constraints~\eqref{eq:bound}. Flow conservation at the nodes $V_j$ gives rise to
the flow variables $y_j$ in the arcs leading to $t$ according
to~\eqref{eq:multiplicity}. The arcs with fixed flow \eqref{eq:given} could be
eliminated from the network, but for ease of notation, we leave them in the
model. The only arcs that carry a cost are the arcs leading to $t$, and the
costs are given by the concave function~\eqref{eq:obj}.

\begin{figure}
  \centering
  \includegraphics {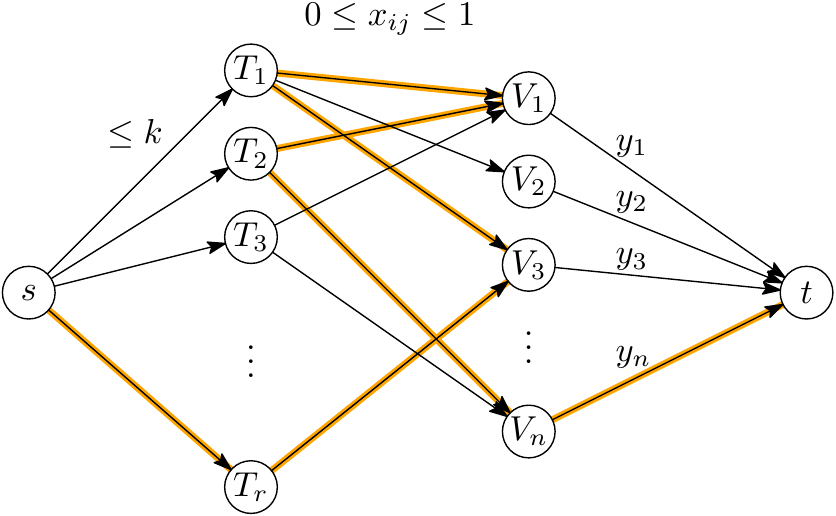}
  \caption{The network.
The middle layer between the vertices $T_i$ and $V_j$ is a complete
bipartite graph, but only a few selected arcs are shown.
    A potential augmenting path is highlighted.}
  \label{fig:network}
\end{figure}
There is now a one-to-one correspondence between integral flows from
$s$ to $t$ in the
network and solutions $(T_1,\ldots,T_r)$, and the cost of the flow is
equal to the diversity
 \eqref{eq:objectivetot} or
 \eqref{eq:obj-alternative}.
We are thus looking for a flow of maximum cost. 
The \emph{value} of the flow (to total flow out of $s$)
can be arbitrary. (It is equal to the sum of the sizes of the sets $T_i$.)

 The concave arc costs~\eqref{eq:obj} on
 the arcs leading to $t$ can be modeled in
a standard way by
multiple arcs.
Denote the concave cost function by $f_y:=y(r-y)$, for $y=0,1,\ldots,r$.
Then each arc $(V_i,t)$ in the last layer is replaced by $r$ parallel
arcs of capacity~1 with
costs $f_1-f_0$,  $f_2-f_1$, \dots,
$f_r-f_{r-1}$.
This sequence of values $f_y-f_{y-1}=r-2y+1$
is decreasing, starting out with positive values and
ending with negative values.
If the total flow along such a bundle is $y$, the
 maximum-cost way to distribute this flow is to fill the first $y$
 arcs to capacity, for a total cost of $(f_1-f_0) + (f_2-f_1)+ \dots+
 (f_y-f_{y-1}) =  f_y-f_0 =f_y$, as desired.

An easy way to compute a maximum-cost flow is the longest augmenting path
method.  (Commonly it is presented as the \emph{shortest} augmenting
path method for the \emph{minimum}-cost flow.)
This holds for the classical flow model where the cost on each arc is a linear
function of the flow.
An augmenting path is
a path in the residual network with respect to the current flow, and
the cost coefficient of an arc in such a path must be taken with opposite sign if
it is traversed in the direction opposite to the original graph.

\begin{Proposition}[The shortest augmenting path algorithm, cf.\ {\cite[Theorem 8.12]{Tarjan}}]
  Suppose a maximum-cost flow among all flows of value $v$ from $s$ to $t$ is given.
  Let $P$ be a maximum-cost augmenting path from $s$ to $t$.
  If we augment the
  flow along this path, this results in a
  new flow, of some value $v'$.
 Then the new flow is a maximum-cost flow  among all flows of value $v'$ from $s$ to $t$.  
\end{Proposition}

Let us apply this algorithm to our network.
We initialize the constrained flow variables $x_{ij}$ according to
\eqref{eq:given} to 1 and all other variables
 $x_{ij}$ to 0. This corresponds to the original solution
 $(S_1,S_2,\ldots,S_r)$, and it is clearly the optimal flow of
 value
 $\sum_{i=1}^r|S_i|$ because it is the only feasible flow of this value.

 We can now start to find augmenting paths.
Our graph is bipartite, and augmenting paths have a very simple
structure:
They start in $s$, alternate back and forth between the $T$-nodes and
the $V$-nodes, and finally make a step to~$t$.
Moreover, in our network, all costs are zero except in the last layer,
and an augmenting path contains precisely one arc from this layer.
Therefore,
\emph{
the cost of an
augmenting path is simply the cost of the final arc}.

The flow variables in the final layer are never decreased.
The resulting algorithm has therefore a simple greedy-like structure.
Starting from the initial flow,
we first try to saturate as many of the arcs of cost $f_1-f_0$ as
possible.
Next,
we try to saturate as many of the arcs of cost $f_2-f_1$ as
possible, and so on.
Once the incremental cost $f_{y+1}-f_y$ becomes negative, we stop.

Trying to find an augmenting path whose last arc is one of the arcs
of cost $f_{y+1}-f_y$, for fixed $y$, is a reachability problem in the residual graph,
and
it can be solved by graph search in $O(nr)$ time because the network has $O(nr)$
vertices.
Every augmentation increases the flow value by 1 unit. Thus, there are
at most $kr$ augmentations, for a total runtime of $O(kr^2n)$.
\end{proof}

\subsection{Faster Augmentation}
We can obtain faster algorithms by using more advanced
network algorithms from the literature.
We will derive one such algorithm here. The best choice depends
on the relation between $n$, $k$, and $r$.
We will apply the following result about \emph{$b$-matchings}, which
 are generalizations of matchings: Each node $v$
has a given \emph{supply} $b(v)$, specifying that $v$ should be incident to at
most $v$ edges.
\begin{Proposition}
  [\cite{ahuja94}]
  \label{b-matching}
  A maximum-weight
  $b$-matching in a bipartite graph with $N_1+N_2$ nodes
on the two sides of the bipartition
  and $M$
edges that have integer weights between $0$ and $W$ can be found in time
$O(N_1M \log(2+\frac{{N_1\!}^2}M \log(N_1W)))$.
\end{Proposition}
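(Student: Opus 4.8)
The plan is to reduce the maximum-weight $b$-matching problem to a single minimum-cost flow computation on an associated bipartite transportation network, and then to invoke a cost-scaling algorithm whose analysis is tuned to exploit the asymmetry between the two sides of the bipartition. First I would build the network: introduce a source $s$ and a sink $t$; for each vertex $u$ on the smaller side (of size $N_1$) add an arc $(s,u)$ of capacity $b(u)$ and cost $0$; for each vertex $w$ on the larger side (of size $N_2$) add an arc $(w,t)$ of capacity $b(w)$ and cost $0$; and replace every edge $uw$ of weight $c_{uw}$ by an arc $(u,w)$ of capacity $1$ and cost $-c_{uw}$. Integral $(s,t)$-flows then correspond bijectively to $b$-matchings, the degree bounds being enforced by the terminal-arc capacities and edge-use-at-most-once by the unit capacities. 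Since every matching arc has non-positive cost and every terminal arc is free, a minimum-cost flow of \emph{unconstrained} value selects exactly a maximum-weight $b$-matching. This construction takes $O(N_1+N_2+M)$ time, so the entire computational burden is shifted onto the flow routine.

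Second, I would solve the resulting min-cost flow by a double-scaling scheme: an outer cost-scaling loop over the $O(\log(N_1 W))$ significant bits of the integer weights, driving at each scale a generalized shortest-augmenting-path (blocking-flow) subroutine on the residual bipartite network. The key structural observation that produces the claimed running time is that in an \emph{unbalanced} bipartite graph one may carry out all distance labeling, price updates, and excess cancellation through the $N_1$ vertices of the small side rather than through all $N_1+N_2$ vertices. Each scaling phase can therefore be executed in $O(N_1 M)$ time instead of $O((N_1+N_2)M)$, and balancing the number of phases against the per-phase cost folds the two logarithmic contributions into the single factor $\log\bigl(2 + \frac{N_1^2}{M}\log(N_1 W)\bigr)$.

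The hard part will be the amortized analysis that yields \emph{precisely} this bound. One must argue that the node prices inherited across consecutive scales keep each residual problem "almost optimal," so that only $O(N_1)$ units of excess require routing per phase, and that the blocking-flow step on the bipartite residual graph runs in $O(N_1 M)$ time; controlling the interaction between these two quantities is exactly what collapses the complexity to the stated form. Reproducing this constant-free expression demands the full bipartite-network-flow machinery of Ahuja, Orlin, Stein, and Tarjan, so rather than rederive the delicate scaling invariants and potential-function accounting here, I would establish the reduction above and then appeal directly to \cite{ahuja94} for the matching bound on the min-cost flow computation.
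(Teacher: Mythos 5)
Your proposal is correct and ends up in the same place as the paper, which states this proposition as a black-box result imported from \cite{ahuja94} and offers no proof of its own; your reduction from maximum-weight $b$-matching to bipartite min-cost flow (unit-capacity edge arcs, terminal arcs of capacity $b(\cdot)$, negated costs) is the standard and valid bridge, and the delicate double-scaling analysis is, as in the paper, deferred entirely to the cited reference. Since both you and the authors ultimately rest the time bound on Ahuja, Orlin, Stein, and Tarjan's unbalanced-bipartite machinery rather than rederiving it, there is nothing further to compare.
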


We will describe below how the network flow problem from above can be
converted into a $b$-matching problem with $N_1=r+1$ plus $N_2=n$ nodes
and $M=2rn$ edges of weight at most $W=2r$.
Plugging these values into Proposition~\ref{b-matching} gives
a running time of
$O(r^2n\log(2+\frac rn\log(r^2)))
=O(r^2n\max\{1,\log\frac{r\log r}n\})$
for finding an optimal augmentation.
This improves over the run time $O(r^2nk)$ from the previous section
unless $r$ is extremely large (at least $2^k$).

From the network of
Figure~\ref{fig:network}, we keep the two layers of nodes $T_i$ and
$V_j$.
Each vertex $T_i$ gets a supply of $b(T_i):= k$, and
each vertex $V_j$ gets a supply of $b(V_j):= r$.
To mimic the piecewise linear costs on the arcs $(V_j,t)$ in the
original network, we introduce
$r$ parallel \emph{slack edges} from a new source vertex $s'$ to each vertex $V_i$.
The costs are as follows. Let $g_1>g_2>\cdots>g_r$ with
$g_y=f_y-f_{y-1}$ denote the costs in the last layer of the original
network,
and let $\hat g:=r$. Since $g_1=r-1$, this is larger than all costs.
Then every edge $(T_i,V_j)$ from the original network gets a weight of $\hat g$,
and the $r$ new slack edges entering each $V_j$ get positive weights $\hat g-g_1,
\hat g-g_2, \ldots,
\hat g-g_r$. We set the supply of the extra source node to $b(s') := rn$,
which imposes no constraint on the number of incident edges.

Now suppose that we have a solution for the original network
 in which the total flow into vertex $V_j$ is $y$.
 In the corresponding $b$-matching, we can then use $b(V_j)-y$ = $r-y$ of the slack edges incident
 to $V_j$. The $r-y$ maximum-weight slack edges have weights
 $\hat g-g_r,
 \hat g-g_{r-1},
 \ldots
 \hat g-g_{y+1}$.
 The total weight of the edges incident to $V_j$ is therefore
 \begin{displaymath}
   r\hat g-g_r-g_{r-1}-\cdots- g_{y+1}
   =   r\hat g+(g_1+g_2+\cdots+ g_{y}),
 \end{displaymath}
 using the equation
 $g_1+g_2+\cdots+ g_{r}= f_r-f_0=0$.
Thus, up to an addition of the constant $nr\hat g$, the maximum weight
of a
$b$-matching agrees with the maximum cost of a flow in the original
network.

\subsection{Diverse Hitting Set}\label{sec:diversedHS}

In this section we show how to use the optimal augmentation technique developed
in Section~\ref{sec:framework} to solve \rDkdHS.
For this we use the following folklore lemma about {minimal} hitting
sets. 
 \begin{Lemma}
   \label{lemma:minimaldHS}
   Let \((U, \F)\) be an instance of \dHS, and let $k$ be an integer. There are
   at most $d^k$ inclusion-minimal hitting sets of \F of size at most $k$, and
   they can all be enumerated in time $d^k |U|^2 $.
\end{Lemma}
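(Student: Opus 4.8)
The plan is to prove the bound on the number of minimal hitting sets by a bounded search tree argument, which simultaneously gives the enumeration algorithm. The key observation is that every set $F \in \F$ has size at most $d$, and a hitting set must contain at least one element of each such $F$. This gives a natural branching rule: pick any unhit set $F$, and branch on which of its (at most $d$) elements to include in the hitting set.

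Concretely, I would define a recursive procedure that maintains a partial hitting set $S$ and the residual family of sets not yet hit by $S$. If every set in $\F$ is already hit, then $S$ is a hitting set and we output it (and then shrink to a minimal one, or argue minimality is handled by the search structure). Otherwise, if $|S| = k$ already and sets remain unhit, we abandon this branch. Otherwise, we select an arbitrary unhit set $F = \{u_1, \ldots, u_{d'}\}$ with $d' \le d$, and recurse into $d'$ branches, adding $u_\ell$ to $S$ in the $\ell$-th branch. Since every minimal hitting set of size at most $k$ must contain at least one element of $F$, every such minimal hitting set is discovered along some root-to-leaf path of this search tree. The depth of the tree is at most $k$ (we add one element per level and stop at $k$), and the branching factor is at most $d$, so the search tree has at most $d^k$ leaves. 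This yields the bound of at most $d^k$ inclusion-minimal hitting sets of size at most $k$.

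For the running time, at each node of the search tree we need to find an unhit set $F$, which takes $O(|\F| \cdot d) \le O(|U|^2)$ time by scanning the family (using $|\F| \le |U|^d$ is too crude, but the folklore statement allows a generous bound; finding one unhit set and checking membership is polynomial in the input size, and the stated $|U|^2$ factor is the per-node cost claimed by the lemma). Multiplying the per-node work by the $O(d^k)$ nodes gives the overall running time $d^k |U|^2$. To guarantee we enumerate only \emph{inclusion-minimal} hitting sets and not redundant supersets, I would either (i) post-process each leaf by greedily removing elements that are not necessary to hit any set, or (ii) more cleanly, observe that the branching already ensures that each element added was, at the moment of its addition, needed to hit the chosen unhit set $F$; a final minimality check and a deduplication pass over the at most $d^k$ candidates keeps us within the claimed time bound.

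The main obstacle I anticipate is the bookkeeping to ensure the output is exactly the set of \emph{inclusion-minimal} hitting sets rather than an overcount: the naive search tree may produce the same hitting set along different branches, and may produce non-minimal sets. This is a standard subtlety in enumeration results of this flavor. I expect the cleanest resolution is to argue that every inclusion-minimal hitting set of size at most $k$ appears as (a subset of) some leaf, so the count of distinct minimal hitting sets is bounded by the number of leaves $d^k$, and then to note that filtering for minimality and removing duplicates among these at most $d^k$ candidates does not increase the asymptotic running time beyond the stated $d^k|U|^2$. Because this is billed as a folklore lemma, I would keep this argument terse and lean on the search-tree size as the dominant structural fact.
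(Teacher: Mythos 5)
The paper never actually proves this lemma---it is invoked as ``folklore'' with no argument given---so there is no in-paper proof to compare against; your bounded-search-tree argument is exactly the standard proof of this folklore fact, and its overall structure is sound. One step does need tightening, however: to deduce ``at most $d^k$ minimal hitting sets'' from ``at most $d^k$ leaves'' you need an injection from inclusion-minimal hitting sets into leaves, and your hedged formulation that each minimal hitting set ``appears as (a subset of) some leaf'' does not deliver this (a single leaf label of size $k$ contains exponentially many subsets, and in fact the containment runs the other way). The precise claim is: if $S$ is an inclusion-minimal hitting set with $|S| \le k$, track it down the tree, maintaining the invariant that the current partial set $S'$ satisfies $S' \subseteq S$; this is preserved because $S$ must hit the chosen unhit set $F$ using an element not in $S'$, and one of the $d$ branches adds exactly such an element. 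The tracked path can never be abandoned, since $|S'| = k$ together with $S'$ a proper subset of $S$ would force $|S| > k$; so it ends at a leaf whose label $T$ is a hitting set with $T \subseteq S$, and inclusion-minimality of $S$ forces $T = S$. Hence every minimal hitting set of size at most $k$ occurs \emph{exactly} as some leaf label, distinct sets occur at distinct leaves, and the count follows. With that one sentence of repair, the rest of your write-up (deduplication and minimality filtering over at most $d^k$ candidates, and the per-node cost, where your bound $O(|\F| \cdot d) \le O(|U|^2)$ is no more generous than the $|U|^2$ factor the lemma itself claims, and is all that the application in Theorem~\ref{thm:rDkdHS_is_FPT} needs, since that theorem only requires $|U|^{O(1)}$) is fine at the level of rigor a folklore statement demands.
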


Combining Lemma~\ref{lemma:minimaldHS} and Theorem~\ref{th:optaug}, we obtain the following result.

\begingroup
\def\thetheorem{\ref{thm:rDkdHS_is_FPT}}
\begin{Theorem}
  \rDkdHS can be solved in time $r^{2}d^{kr}\cdot |U|^{O(1)}$.
\end{Theorem}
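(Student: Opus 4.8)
The plan is to combine the two results we already have in a black-box fashion. The statement we must prove is Theorem~\ref{thm:rDkdHS_is_FPT}, namely that \rDkdHS can be solved in time $r^{2}d^{kr}\cdot |U|^{O(1)}$, and the two ingredients are Lemma~\ref{lemma:minimaldHS} (enumeration of minimal hitting sets) and Theorem~\ref{th:optaug} (optimal augmentation via network flow). The key observation that makes the framework of Section~\ref{sec:framework} applicable is that the family of hitting sets of $\F$ is \emph{upward closed}: if $S$ is a hitting set and $S\subseteq T\subseteq U$, then $T$ hits every set of $\F$ as well. Hence every hitting set of size at most $k$ contains some inclusion-minimal hitting set of size at most $k$, and conversely every superset (within $U$, of size at most $k$) of a minimal hitting set is itself a valid solution.

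First I would invoke Lemma~\ref{lemma:minimaldHS} to compute, in time $d^{k}|U|^{2}$, the set $\mathcal{S}$ of all inclusion-minimal hitting sets of $\F$ of size at most $k$; by the same lemma $|\mathcal{S}|\le d^{k}$. Next I would observe that solving \rDkdHS reduces exactly to solving $\texttt{diverse}_{r,k}(\mathcal{S})$: an optimal diverse collection $(T_1,\ldots,T_r)$ of hitting sets of size at most $k$ can be assumed, after shrinking each $T_i$ to a minimal hitting set it contains and keeping the rest of $T_i$ as ``augmentation'', to have each $T_i$ sandwiched as $S_i\subseteq T_i\subseteq U$ for some $S_i\in\mathcal{S}$ with $|T_i|\le k$. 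This is precisely the feasible region over which $\texttt{diverse}_{r,k}(\mathcal{S})$ maximizes $\sumdiversity$, so the optimum of the augmentation problem equals the maximum achievable diversity. Finally I would decide the instance by comparing this optimum against the input threshold $t$, outputting the corresponding collection if it meets the bound.

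For the running time I would apply Theorem~\ref{th:optaug} with $s=|\mathcal{S}|\le d^{k}$, which gives a bound of $r^{2}s^{r}\cdot|U|^{O(1)} = r^{2}(d^{k})^{r}\cdot|U|^{O(1)} = r^{2}d^{kr}\cdot|U|^{O(1)}$ for the augmentation phase. The enumeration phase costs $d^{k}|U|^{2}$, which is dominated by the augmentation bound, so the total is $r^{2}d^{kr}\cdot|U|^{O(1)}$ as claimed.

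The proof is essentially an assembly of two prior results, so there is no hard technical obstacle. The main point requiring care is the correctness argument that $\texttt{diverse}_{r,k}(\mathcal{S})$ genuinely captures \rDkdHS — specifically, justifying that restricting attention to collections whose members each contain a \emph{minimal} hitting set from $\mathcal{S}$ loses no generality. This follows from upward-closedness: any solution set of size at most $k$ contains some minimal hitting set of size at most $k$, which by Lemma~\ref{lemma:minimaldHS} lies in $\mathcal{S}$, so the feasible region of $\texttt{diverse}_{r,k}(\mathcal{S})$ is exactly the set of all $r$-tuples of size-$\le k$ hitting sets. I would state this explicitly to close the argument.
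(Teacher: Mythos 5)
Your proposal is correct and follows essentially the same route as the paper: enumerate all inclusion-minimal hitting sets of size at most $k$ via Lemma~\ref{lemma:minimaldHS}, note that hitting sets are upward closed so $\texttt{diverse}_{r,k}(\mathcal{S})$ captures \rDkdHS exactly, and apply Theorem~\ref{th:optaug} with $s \le d^k$. The only difference is that you spell out the correctness of the reduction (that every size-$\le k$ hitting set contains a member of $\mathcal{S}$), which the paper leaves implicit but is a welcome clarification.
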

\addtocounter{theorem}{-1}
\endgroup
\begin{proof}
  Using Lemma~\ref{lemma:minimaldHS}, we can construct the set $\mathcal{S}$ of
  all inclusion-minimal hitting sets of $\F$, each of size at most $k$. Note
  that the size of $\mathcal{S}$ is bounded by $d^k$. As every superset of an
  element of $\mathcal{S}$ is also a hitting set, the theorem follows directly
  from Theorem~\ref{th:optaug}.
\end{proof}

\section{Diverse Feedback Vertex Set}
\label{Diverse_FVS}

A \emph{feedback vertex set} (FVS) (also called a \emph{cycle
  cutset}) of a graph \(G\) is any subset \(S\subseteq{}V(G)\) of
vertices of \(G\) such that every cycle in \(G\) contains at least
one vertex from \(S\). The graph \(G-S\) obtained by deleting
\(S\) from \(G\) is thus an acyclic graph. Finding an FVS of small
size is an NP-hard problem~\cite{KarpNPc} with a number of applications in
Artificial Intelligence, many of which stem from the fact that many
hard problems become easy to solve in acyclic graphs.
An example for this is the Propositional Model Counting (or \#SAT) problem which asks for the number of
satisfying assignments for a given CNF formula, and has a number of
applications, for instance in
planning~\cite{domshlak2006fast,palacios2005pruning} and in probabilistic
inference problems such as Bayesian
reasoning~\cite{bacchus2003algorithms,littman2001stochastic,sang2005performing,apsel2012lifted}.
A popular approach to solving \#SAT consists of first finding a small FVS \(S\)
of the CNF formula. Assigning values to all the variables in \(S\) results in an
acyclic instance of CNF. The algorithm assigns all possible sets of values to
the variables in \(S\), computes the number of satisfying assignments of the
resulting acyclic instances, and returns the sum of these
counts~\cite{dechter2003constraint}.

In this section, we focus on the \textsc{Diverse Feedback Vertex Set} problem and prove the following theorem.

\begingroup
\def\thetheorem{\ref{thm:fvs}}
\begin{Theorem}
 \rDkFVS can be solved in time $2^{7kr}\cdot n^{O(1)}$.
\end{Theorem}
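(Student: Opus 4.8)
The plan is to instantiate the framework of \autoref{sec:framework}, exactly as was done for \dHS in \autoref{sec:diversedHS}. The only genuinely problem-specific ingredient required by that framework is (i) the upward-closedness of the solution family and (ii) an efficient enumeration of all minimal solutions of size at most $k$, playing the role that \autoref{lemma:minimaldHS} plays for hitting sets. Upward-closedness is immediate for \FVS: if $S$ is a feedback vertex set of $G$ and $T \supseteq S$, then $G - T$ is a subgraph of the forest $G - S$ and hence itself a forest, so $T$ is again a feedback vertex set. Moreover every feedback vertex set of size at most $k$ contains a minimal one (obtained by greedily discarding removable vertices), which is also of size at most $k$; hence the set $\mathcal{S}$ of all minimal feedback vertex sets of size at most $k$ satisfies the hypotheses of the framework.

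The crux is therefore the following enumeration statement, which I would isolate as a lemma: \emph{the number of minimal feedback vertex sets of $G$ of size at most $k$ is at most $2^{7k}$, and they can all be enumerated in time $2^{7k} \cdot n^{O(1)}$.} To prove it I would use a bounded-depth branching procedure. First apply the standard reduction rules that are safe for minimal feedback vertex sets: delete every vertex of degree at most $1$ (no such vertex lies on a cycle, hence in no minimal solution) and contract every vertex of degree exactly $2$; repeating these until exhaustion yields a graph of minimum degree at least $3$ without changing the family of minimal solutions. On the reduced graph I would branch on a small, explicitly bounded local structure, choosing for each branch a vertex to place into the partial solution $S$ (charging one unit of the budget $k$) or a vertex to commit to the remaining forest. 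The \emph{deepest leaf} structure recorded in \autoref{sec:prelims}---a deepest leaf of the current forest part has a second leaf within distance $2$ unless its neighbour has degree $2$---guarantees that every branching step acts on a constant-size neighbourhood, so that the number of branches charged to each unit of budget is a constant and the total number of leaves of the search tree, and thus the number of enumerated minimal solutions, is at most $2^{7k}$.

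With this lemma in hand the theorem follows immediately. Running the enumeration produces the set $\mathcal{S}$ of minimal feedback vertex sets of size at most $k$ with $s := |\mathcal{S}| \le 2^{7k}$, and by upward-closedness every feedback vertex set of size at most $k$ arises as an augmentation of some element of $\mathcal{S}$. Applying \autoref{th:optaug} to $\textup{\texttt{diverse}}_{r,k}(\mathcal{S})$ then solves \rDkFVS in time $r^2 s^r \cdot n^{O(1)} \le r^2 \cdot 2^{7kr} \cdot n^{O(1)} = 2^{7kr} \cdot n^{O(1)}$, where the polynomial-in-input factor $r^2$ is absorbed into the final $n^{O(1)}$.

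The main obstacle is the enumeration lemma, and specifically pinning down the branching so that the constant in the exponent is at most $7$ while simultaneously guaranteeing that the search is both \emph{complete} (every minimal solution appears) and free of fatal overcounting. The delicate points are arguing that the degree reductions preserve the family of minimal solutions---in particular handling the multi-edges and self-loops created by contraction, which force a vertex into $S$---and verifying that the local branching rule, justified by the deepest-leaf property, indeed covers all the ways a minimal solution can interact with the bounded neighbourhood while charging each branch to the budget $k$.
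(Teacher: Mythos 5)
There is a genuine gap: the enumeration lemma at the heart of your argument is false. The number of inclusion-minimal feedback vertex sets of size at most $k$ is \emph{not} bounded by any function of $k$ alone. Consider the paper's own cautionary example: a graph consisting of $k$ vertex-disjoint cycles, each of length $n/k$. Every set containing exactly one vertex from each cycle is a feedback vertex set of size $k$, and each such set is inclusion-minimal, since dropping any vertex leaves its cycle unhit. This yields $(n/k)^k$ minimal solutions, so no bound of the form $2^{7k}$ (nor any bound $f(k)$) can hold, and in particular the analogue of Lemma~\ref{lemma:minimaldHS} that your plan rests on simply does not exist for \FVS. The reduction rules do not rescue the argument: contracting degree-$2$ vertices does \emph{not} preserve the family of minimal solutions; it collapses many distinct minimal solutions (differing only in which vertex of a long induced path or cycle they use) into a single object. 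Your branching bound of $2^{7k}$ counts these collapsed objects, not the minimal feedback vertex sets themselves.

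This is exactly the obstacle the paper's proof is designed around. Instead of enumerating minimal solutions, Lemma~\ref{lem:eqFVS} enumerates at most $2^{7k}$ \emph{classes of FVS solutions} $(S,\ell)$, where $\ell$ assigns to each $v \in S$ a set $\ell(v)$ of interchangeable vertices (vertices hitting the same cycles, i.e., precisely the information lost by contraction), such that every feedback vertex set of size at most $k$ is \emph{described} by some class. This change propagates into the second phase: one cannot invoke Theorem~\ref{th:optaug} as a black box, because the constraint on $T_i$ is no longer $T_i \supseteq S_i$ but rather ``$T_i$ contains at least one vertex from each set $\ell_i(v)$, $v \in S_i$''. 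The paper handles this by modifying the flow network (Figure~\ref{fig:network2}): an extra layer of nodes $L_q$ with forced unit flow from the source enforces the pick-one-per-set constraints, and a separate (easy) argument produces the initial feasible flow, which is no longer trivial. Your outline is missing both of these ingredients; supplying them essentially turns it into the paper's proof.
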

\addtocounter{theorem}{-1}
\endgroup

In order to solve \textsc{$r$-Diverse $k$-Feedback Vertex Set}, one natural way
would be to generate every feedback vertex set of size at most $k$ and then
check which set of $k$ solutions provide the required sum of Hamming distances.
Unfortunately, the number of feedback vertex set is not \FPT parameterized by
$k$. Indeed, one can consider a graph containing $k$ cycle of size
$\frac{n}{k}$, leading to $\left(\frac{n}{k}\right)^k$ different feedback vertex
sets of size $k$.

We avoid this problem by generating all such small feedback vertex sets up to some equivalence of degree two vertices.
We obtain an exact and efficient description of all feedback vertex sets of size at most $k$, which is formally captured by Lemma~\ref{lem:eqFVS}.
A \emph{class of solutions} of a graph $G$, is a pair $(S,\ell)$ such
that $S \subseteq V(G)$ and $\ell : S \to 2^{V(G)}$ is a function such
that for each $u \in S$, $u \in \ell(u)$, and for each $u,v \in S$, $u
\not = v$, $\ell(u) \cap \ell(v) = \emptyset$.
Given a class of solutions $(S,\ell)$, we define $\sol(S,\ell) = \{ S' \subseteq V(G): |S'| = |S| \text{ and } \forall v \in S, |S' \cap \ell(v)| = 1\}$.
A \emph{class of FVS solutions} is a class of solutions $(S,\ell)$ such that each $S' \in \sol(S,\ell)$ is a feedback vertex set of $G$.
Moreover, if $S' \in \sol(S,\ell)$ and $S' \subseteq S'' \subseteq V(G)$, we say that $S''$ is \emph{described} by $(S,\ell)$.
Note that $S''$ is also a feedback vertex set.
In a class of FVS solutions $(S,\ell)$, the meaning 
of the function $\ell$ is that,
for each cycle $C$ in $G$, there exists $v\in S$ such that each element of $\ell(v)$ hits $C$.
This allows us to group related solutions
 into only one set
$\sol(S,\ell)$.

\begin{Lemma}\label{lem:eqFVS}
  Let $G$ be a $n$-vertex graph.
  There exists a set $\mathcal{S}$ of classes of FVS solutions of $G$
  of size at most $2^{7k}$ such that each 
  feedback vertex set of size at most $k$ is described by an element of $\mathcal{S}$.
  Moreover, $\mathcal{S}$ can be constructed in time {$2^{7k}\cdot n^{O(1)}$}.
\end{Lemma}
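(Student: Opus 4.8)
The plan is to construct $\mathcal{S}$ by a bounded-depth branching procedure that works on a reduced \emph{core} of $G$, uses the label function $\ell$ to absorb all the freedom coming from degree-two vertices, and charges the branching to the budget $k$ rather than to $|V(G)|$.

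First I would reduce $G$ to its relevant core. Iteratively delete vertices of degree at most $1$ (they lie on no cycle and belong to no minimal FVS) and \emph{suppress} (contract) vertices of degree $2$, recording, for each maximal path of degree-two vertices, the set of original vertices in its interior. This yields a multigraph $H$ of minimum degree at least $3$ in which every edge $e$ carries an \emph{interior} set $P_e$ of suppressed vertices (possibly empty) and every vertex of $H$ is a vertex of $G$. A connected component that is an isolated cycle is handled at once: it must be hit and any vertex works, so it contributes one representative $u$ with $\ell(u)$ equal to that whole cycle. The design goal of this phase is the equivalence: a set is an inclusion-minimal FVS of $G$ of size at most $k$ iff it corresponds to a minimal \emph{core solution} of $H$ that, for each cycle of $H$, either deletes a branch vertex on it or \emph{cuts} a chain by choosing one interior vertex, and all interior choices of a fixed chain are interchangeable. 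Each minimal core solution then yields exactly one class $(S,\ell)$: put the chosen branch vertices and one representative per cut chain into $S$, set $\ell(u)=\{u\}$ on branch vertices and $\ell(u)=P_e\cup\{u\}$ on the representative of a cut chain $e$, so that $\sol(S,\ell)$ is precisely the set of minimal FVS of that combinatorial type. Distinct core solutions give distinct classes, and their $\sol$-sets jointly contain every minimal FVS of size at most $k$; by closing under supersets, every FVS of size at most $k$ is described.

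The heart of the argument is to bound the number of minimal core solutions of size at most $k$ by $2^{7k}$ and to enumerate them in time $2^{7k}\cdot n^{O(1)}$, which I would do by recursion on $H$. Self-loops and cycles formed by two parallel chains admit only a constant number of ways to be hit (delete one of the two endpoints, or cut one of the chains), so these give a bounded branching factor while committing at least one vertex or chain to the solution. Once $H$ is simple with minimum degree at least $3$, I would take a DFS tree and a deepest leaf $v$: the deepest-leaf property recorded in the preliminaries guarantees that, unless $v$'s tree-neighbor has degree two (a chain already removed by the reductions), a second leaf lies within distance two of $v$. This exposes a bounded-size local configuration around $v$ whose vertices are the only candidates for hitting the cycles through $v$, so branching over them keeps the branching factor constant while decreasing the budget by one in each branch. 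Each branch commits at least one element to the solution, so the recursion has depth at most $k$ and branching factor at most $2^{7}$, giving at most $2^{7k}$ leaves, hence at most $2^{7k}$ classes, within the stated time.

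The main obstacle is exactly this counting step, and its subtlety is that one \emph{cannot} bound it by first bounding the size of $H$: the graphs $K_{3,m}$ have minimum degree $3$, arbitrarily many vertices, and feedback vertex number $2$, so the core can be enormous while admitting tiny solutions. The branching measure must therefore be the budget $k$ and never the vertex count, and one must verify that whenever the core is large, every minimal solution that would \emph{use} the large part is itself larger than $k$ and so is never generated. Making this rigorous---selecting the local configuration so that the branching factor is genuinely a constant (the source of the exponent $7$), correctly treating multi-edges, self-loops, and the interface between hitting a cycle at a branch vertex versus cutting it along a chain, and confirming that the produced classes cover \emph{every} minimal FVS of size at most $k$---is the delicate part of the proof.
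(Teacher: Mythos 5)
There is a genuine gap, and it sits at the heart of your argument: you build the classes from a \emph{statically} reduced core $H$ (suppressing degree-two vertices of $G$ once, up front) and then claim that the number of minimal core solutions of size at most $k$ is at most $2^{7k}$. This claim is false. Consider the wheel $W_m$: a hub $h$ adjacent to every vertex of a rim cycle $v_1\cdots v_m$. This graph is simple, has minimum degree $3$ and no degree-two vertices, so your reduction does nothing ($H=G$ and all interior sets $P_e$ are empty); yet for $m\ge 5$ the $m$ sets $\{h,v_i\}$ are pairwise distinct minimal feedback vertex sets of size $2$. Hence the number of minimal core solutions is not bounded by any function of $k$, and your construction (one class per minimal core solution, with singleton labels) would output $m$ classes, exceeding $2^{7k}=2^{14}$ once $m>2^{14}$, so both your counting claim and the conclusion of the lemma fail under your construction. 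The root cause is that the degree-two equivalence must be applied \emph{dynamically}, relative to the vertices already committed to the solution: the rim vertices of $W_m$ become interchangeable only \emph{after} $h$ has been deleted. Accordingly, the paper's invariant for $\ell$ is that all vertices of $\ell(v)$ hit the same cycles of $G\setminus A$, where $A$ is the current partial solution, not the same cycles of $G$, and the contraction rule is applied \emph{inside} the branching, in the graph $G'[B\cup C]$.

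A second, related gap is your termination argument. You assert that every branch ``commits at least one element to the solution,'' so that the recursion has depth at most $k$ and branching factor $2^7$. For such include-only branching to be a complete enumeration, at every node you would need a set of at most $2^7$ candidates (vertices or equivalence classes) intersecting \emph{every} minimal solution of size at most the remaining budget; you give no argument for this, and it is precisely the hard part --- the wheel after deleting $h$ shows that without dynamic contraction no constant-size such set exists, since the remaining minimal solutions $\{v_1\},\ldots,\{v_m\}$ are pairwise disjoint. The paper never makes this claim. Instead it uses iterative compression: it first computes one feedback vertex set $F$ with $|F|\le k$ (in time $3.62^k\cdot n^{O(1)}$), guesses the intersection $F'$ of $F$ with the sought solution ($2^k$ choices, accounting for the factor $2^k$ in $2^{7k}$), and then branches with \emph{both} kinds of decisions: a vertex may be placed in the solution side $A$ or in the excluded side $B$. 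Termination is guaranteed not by the budget alone but by the measure $|A|-\texttt{cc}(B)$, which increases in every branching step and ranges over $[-k,k]$ precisely because $B$ is initialized as a subset of $F$; this yields at most $2k$ steps with factor at most $8$, i.e., $2^{6k}$ leaves per guess. Your deepest-leaf idea does appear in the paper, but it is applied to leaves of the forest $G'[C]$ of \emph{undecided} vertices, where the known acyclic set $B$ supplies the structure that makes the $8$-way branch sound; without the compression framework, both that structure and the bound on the recursion depth are missing, and I see no easy repair of your static approach.
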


\begin{proof}
  Let $G$ be a $n$-vertex graph.
  We start by generating a feedback vertex set $F\subseteq V$ of size
  at most~$k$.
  The current best deterministic
  algorithm  for this by Kociumaka and Pilipczuk~\cite{KOCIUMAKA2014556} 
  finds such a 
  set in time $3.62^k\cdot n^{{O}(1)}$.
In the following, we use the ideas used for the iterative compression approach~\cite{ReSmVe2004}.

For each subset $F'\subseteq F$,
we initiate a
branching process
by setting $A := F'$, $B := F-F'$,
and $G' := G$.
Observe that initially, as
$B 
\subseteq F$ and $|F|\leq k$,
the graph $G[B]$
has at most $k$ components.
In the branching process, we will add more vertices to $A$ and $B$,
and we will remove vertices and edges from $G'$,
but we will maintain the property that $A \subseteq V(G')$ and $B \subseteq V(G')$.
The set $C$ will always denote the vertex set $V(G') \setminus (A \cup B)$.
Note that $G'[C]$ is initially a forest; we ensure that it always remains a forest.

We also initialize a function $\ell\colon V(G) \to 2^{V(G)}$
by setting $\ell(v)=\{v\}$ for each $v \in V(G)$.
This function will keep information about vertices that are deleted
from $G$.
While searching for a feedback vertex set,
we consider only feedback vertex sets that contain all vertices of $A$ but no vertex of $B$.
Vertices in $C$ 
are still undecided.
The function $\ell$ will maintain the invariant that for each $v \in
V(G')$, $\ell(v) \cap V(G') = \{v\}$,  and
for each $v \in C$,  all vertices of $\ell(v)$ intersect exactly the same cycles in $G \setminus A$.
Moreover, for each $v \in A$, the value $\ell(v)$ is fixed and will not be modified anymore in the branching process.
During the branching process, we will progressively increase the size
of $A$, $B$, and the sets $\ell(v)$, $v \in V(G)$.

By \emph{reducing} $(G',A,B,\ell)$ we mean that we apply the 
following rules exhaustively.
\begin{itemize}
\item If there is a $v \in C$ such that $\delta_{G'[B\cup C]}(v) \leq 1$,  we delete $v$ from $G'$.
\item If there is an edge $\{u,v\} \in E(G'[C])$ such that
  $\delta_{G'[B \cup C]}(u) = \delta_{G'[B \cup C]}(v) = 2$,
  we contract $u$ in $G'$ and set $\ell(v)\df \ell(v)\cup \ell(u)$.
\end{itemize}
These are classical preprocessing rules for the \textsc{Feedback Vertex Set} problem, see for instance~\cite[Section 9.1]{CyFoKoLoMaPiPiSa2015}.
Indeed, vertices of degree one cannot appear in a cycle, and consecutive vertices of degree $2$ hit exactly the same cycles.
After this preprocessing, there are no adjacent degree-two vertices
and no degree-one vertices in $C$. (Degrees are measured in
$G'[B\cup C]$.)

We start to describe the branching procedure.
We work on the tuple $(G',A,B,\ell)$.
After each step, the value $|A|-\texttt{cc}(B)$ will increase,
where $\texttt{cc}(B)$ denotes the number of connected components of $G'[B]$.

At each step of the branching we do the following.
If $|A| > k$ or if $G'[B]$ contains a cycle, we immediately stop this branch as there is no solution to be found in it.
If $A$ is a feedback vertex set of size at most $k$, then $(A,\ell|_A)$ is a class of FVS solutions, we add it to $\mathcal{S}$ and stop working on this branch.
Otherwise, we reduce $(G',A,B,\ell)$. We pick a deepest leaf $v$ in $G'[C]$
and apply one of the two following cases, depending on the vertex $v$.

\begin{itemize}
\item {\bf Case 1:}
  The vertex $v$ has at least two neighbors in $B$ (in the graph~$G'$).
  
  If there is a path in $B$ between two neighbors of $v$, then we have to put $v$ in $A$, as otherwise this path together with $v$ will induce a cycle.
  If there is no such path, we branch on both possibilities,
  inserting $v$ either into $A$ or into $ B$.
  
\item {\bf Case 2:}
   The vertex $v$ has at most one neighbor in $B$.

  Since
  $v$ is a leaf in $G'[C]$, it has at most one neighbor also in $C$.
  On the other hand, we know that $v$ has degree at least 2 in
  $G'[B\cup C]$. Thus, $v$ has exactly one neighbor in $B$ and one
neighbor in $C$, for a degree of $2$ in $G'[B \cup C]$.
Let $p$ be the neighbor in $C$.
Again, as we have reduced $(G',A,B,\ell)$,  the degree of $p$ in
$G'[B\cup C]$ 
is at least $3$.
  So either it has a neighbor in $B$, or, as $v$ is a deepest leaf, it
  has another child, say $w$, that is also a leaf in $G'[C]$, and $w$
  has therefore a neighbor in $B$.
  We branch on the at most $2^3=8$ possibilities to allocate $v$, $p$,
  and $w$ if considered, between $A$ and $B$, taking care not to
  produce a cycle in~$B$.
\end{itemize}

In both cases, either we put at least one vertex in $A$, and so
$|A|$ increases by one, or all considered vertices are added to~$B$.
In the latter case,
the considered vertices are connected, at least two of them have a neighbor in $B$, and no
cycles were created; therefore, the number of components in $B$ drops by one.
Thus $|A|-\texttt{cc}(B)$
increases by at least one.
As $-k\le |A|-\texttt{cc}(B) \le k$,
there can be at most  $2k$ branching steps.

Since we branch at most $2k$ times and at each branch we have at most
$2^3$ possibilities, the branching tree has at most $2^{6k}$ leaves.
So, for each of the at most $2^k$
 subsets $F'$ of $F$,
 we add at most $2^{6k}$ elements to $\mathcal{S}$.

It is clear 
that we have obtained all solutions of FVS
and they are described by the classes of FVS solutions in
$\mathcal{S}$, which is of size $2^{7k}$.
\end{proof}

\begin{proof}[Proof of Theorem~\ref{thm:fvs}]
We  generate all $2^{7kr}$ $r$-tuples of the classes of solutions given by
Lemma~\ref{lem:eqFVS}, with repetition allowed.

We now consider each $r$-tuple $((S_1,\ell_1),(S_2,\ell_2),\ldots,(S_r,\ell_r)) \in \mathcal{S}^r$
and try to pick an appropriate solution $T_i$ from each class of solutions $(S_i,\ell_i)$, $i \in [1,k]$,
in such a way that the diversity
of the resulting tuple of
feedback vertex sets $(T_1,\ldots,T_r)$ is maximized.
The network of Section~\ref{sec:augmentation} must be
adapted to model the constraints resulting from solution classes.
Let $(S,\ell)$ be a solution class, with $|S|=b$.
For 
our construction, we just need to know
the family $\{\,\ell(v)\mid v\in S\,\} = 
\{L_1,L_2,\ldots,L_b\}$
of
disjoint nonempty vertex sets.
The solutions that are 
{described} by
this class are
all sets that can be obtained by picking
at least one
vertex from each set $L_q$.
Figure~\ref{fig:network2} shows the necessary adaptations for
\emph{one} solution $T=T_i$. In addition to a single node $T$
that is either directly of indirectly connected to all nodes $V_1,\ldots,V_n$,
like in Figure~\ref{fig:network},
we have additional
nodes 
representing the sets $L_q$.
For each vertex $j$ that appears in one of the sets $L_q$, there is an
additional node $U_j$ in an intermediate layer of the network.
The flow from $s$ to $L_q$ is
forced to be equal to 1, and this ensures that at least one element of
the set $L_q$ is chosen in the solution.
Here it is important
that the sets $L_q$ are disjoint.

A similar structure must be built
for each set $T_1,\ldots,T_r$,
and all these structures share the vertices $s$ and $V_1,\ldots,V_n$.
The rightmost layer of the network is
the same as in Figure~\ref{fig:network}.
\begin{figure}
  \centering
  \includegraphics {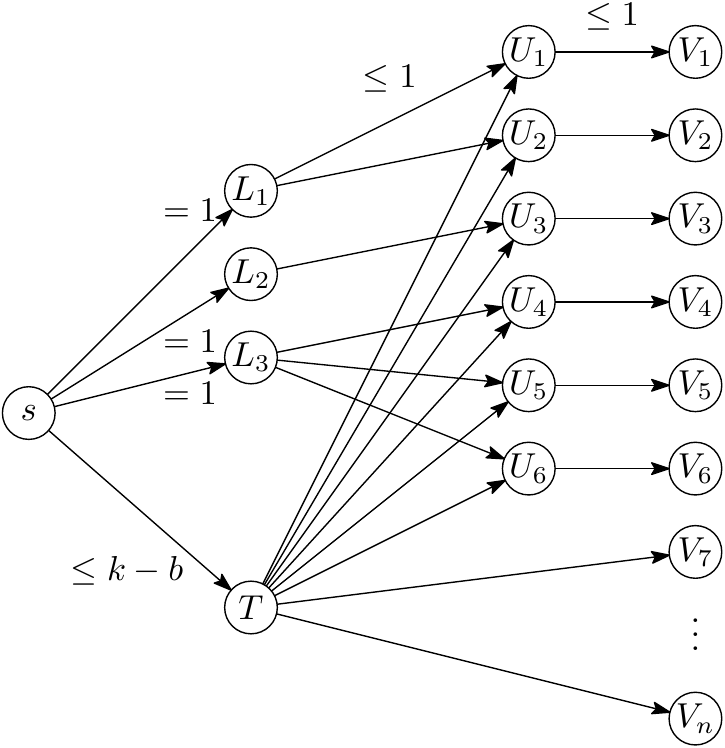}
  \caption{Part of the modified network for a solution $T$ which is specified by $b=3$ sets
    $L_1=\{1,2\}, L_2=\{3\}$, and $L_3=\{4,5,6\}$.}
  \label{fig:network2}
\end{figure}

The initial flow is not so straightforward as in
Section~\ref{sec:augmentation} but is still easy to find.
We simply saturate the arc from $s$ to  each of the nodes $L_q$ in turn
by a shortest augmenting path. Such a path can be found by a simple
reachability search in the residual network, in $O(rn)$ time.
The total running time $O(kr^2n)$ from
Section~\ref{sec:augmentation} remains unchanged.
\end{proof}

\section{Modeling Aspects: Discussion of the Objective Function}
\label{modeling}

In Sections~\ref{sec:framework} 
and~\ref{Diverse_FVS}, we have used 
the sum of the Hamming distances, $\sumdiversity$, as
the measure of diversity. While this metric is of natural interest, it appears
that in some specific cases, it may not be a useful choice. We present a simple
example where the \emph{most diverse} solution according to $\sumdiversity$
is not what one might expect.

Let $r$ be an even number.
We consider the path with $2r-2$
vertices, and
we are looking for $r$ vertex covers of size at most $r-1$,
of maximum diversity.
\begin{figure}[htb]
  \centering
  \includegraphics{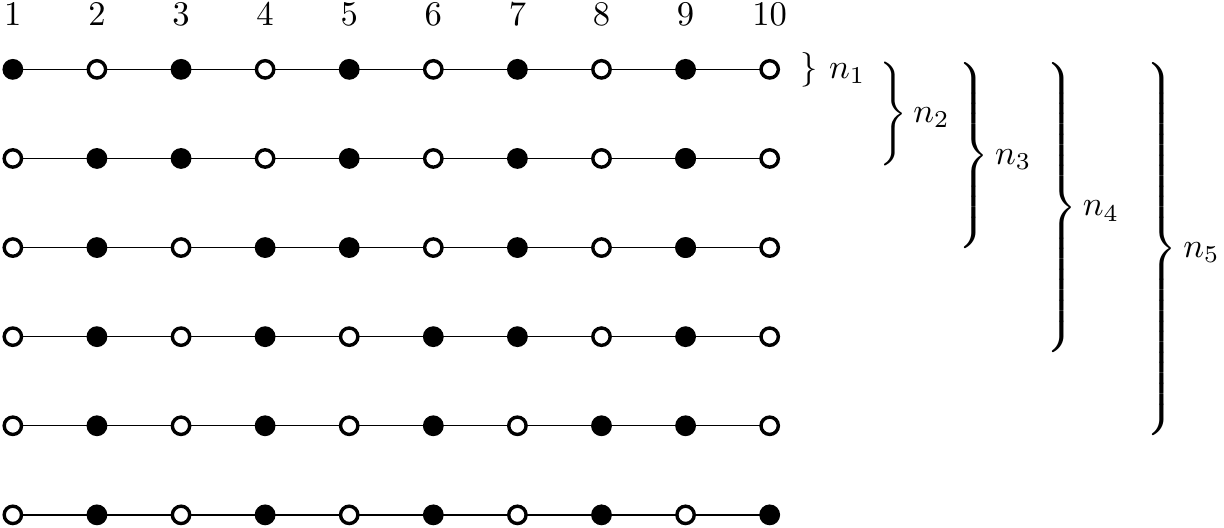}
  \caption{The $r=6$ different vertex covers of size $r-1=5$ in a path
    with $2(r-1)=10$ vertices}
  \label{fig:bad-example}
\end{figure}
Figure~\ref{fig:bad-example} shows an example with $r=6$.  The
smallest size of a vertex cover is indeed $r-1$, and there are $r$
different solutions. One would hope that the ``maximally diverse''
selection of $r$ solutions would pick all these different solutions.
But no, the selection that maximizes $\sumdiversity$ 
consists of
$r/2$ copies of just \emph{two} solutions, the ``odd''
vertices
and the
``even'' vertices
(the first and last solution in
Figure~\ref{fig:bad-example}).

This can be seen as follows.  If the
selected set contains in total $n_i$
copies of the first $i$ solutions in the order of
Figure~\ref{fig:bad-example},
then
the objective can be written as
\begin{displaymath}
  2 
  n_1(r-n_1)
  +  2n_2(r-n_2)
  + \cdots 
  +  2n_{r-1}(r-n_{r-1})
  .
\end{displaymath}
Here,
each term $2n_i(r-n_i)$ accounts for two consecutive vertices $2i-1,2i$
of the path in the formulation~\eqref{eq:obj-alternative}.
The unique way of  maximizing each
 term individually is to set $n_i=r/2$ for all~$i$.
This corresponds to the selection of $r/2$ copies of the first
solution and $r/2$ copies of the last solution, as claimed.

In a different setting, namely the distribution of $r$ points inside a square,
an analogous phenomenon has been observed
\cite[Figure~1]{defining-diversity-measures-2010}:
Maximizing the sum of pairwise Euclidean distances places all points
at the corners of the square.
In fact it is easy to see that, in this geometric setting, any locally
optimal
solution must place all
points 
on the boundary of the feasible region.
By contrast, for our combinatorial problem,
we don't know whether this pathological behavior is typical or
rare in instances that are not specially constructed.
Further research is needed.
A notion of diversity which
is more robust
in this respect is
the \emph{smallest}
difference between two solutions, which we consider in Section~\ref{max}.

\section{Maximizing the Smallest Hamming distance}
\label{max}
The undesired behavior highlighted in Section~\ref{modeling} is the fact that
the collection that maximizes the sum of the Hamming distances uses several
copies of the same set. In this section we explore how to handle this unexpected
behavior by changing the distance to the minimal Hamming distance between two
sets of the collection. This modification naturally removes the possibility of
selecting the same solution twice. We show how to solve \rMDkdHS and
\textsc{$r$-Min-Diverse $k$-Feedback Vertex Set} for this metric.

\begingroup
\def\thetheorem{\ref{thm:rMDkdHS_is_FPT}}
\begin{Theorem}
  \rMDkdHS can be solved in time
  \begin{itemize}
  \item $2^{kr^2}\cdot (kr)^{O(1)}$
    if $|U| < kr$ and
  \item \(d^{kr}\cdot |U|^{O(1)}\) otherwise.
  \end{itemize}
\end{Theorem}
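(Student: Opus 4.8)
The plan is to handle the two regimes of the statement separately, exploiting the fact that the union of any feasible collection of $r$ hitting sets of size at most $k$ contains at most $kr$ distinct elements of $U$. The dividing line $|U|\ge kr$ is exactly the threshold that guarantees ``enough room'' to pad solutions with unused elements; below it the universe is small enough to brute-force.

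\textbf{Regime $|U|\ge kr$.} First I would use Lemma~\ref{lemma:minimaldHS} to enumerate the set $\mathcal{S}$ of all inclusion-minimal hitting sets of $\F$ of size at most $k$; there are at most $d^k$ of these. Since every hitting set of size at most $k$ contains some minimal hitting set of size at most $k$, it suffices to loop over all $d^{kr}$ tuples $(S_1,\dots,S_r)\in\mathcal{S}^r$ as candidate ``bases'' and, for each, compute the best way to augment $S_i$ into a hitting set $T_i$ with $S_i\subseteq T_i\subseteq U$ and $|T_i|\le k$ so as to maximize $\mindiversity(T_1,\dots,T_r)$. The key point --- and what makes $\mindiversity$ behave better than $\sumdiversity$ here --- is that this augmentation is trivial: one simply fills each $S_i$ up to size exactly $k$ using \emph{fresh} elements, that is, elements of $U$ lying in none of the $S_j$ and chosen pairwise distinct across the $r$ sets. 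I would justify optimality of this ``fresh-filling'' by two observations. First, enough fresh elements exist: the number required is $\sum_i (k-|S_i|)$, while the number available is $|U|-|\bigcup_j S_j|\ge |U|-\sum_j|S_j|\ge kr-\sum_j|S_j|=\sum_i(k-|S_i|)$, where the middle step is precisely the hypothesis $|U|\ge kr$. Second, fresh-filling dominates every pairwise distance: writing $T_i^{f}$ for the fresh-filled set we have $|T_i^{f}|=k$ and $T_i^{f}\cap T_j^{f}=S_i\cap S_j$, so $d_H(T_i^{f},T_j^{f})=2k-2|S_i\cap S_j|$, whereas any valid augmentation satisfies $d_H(T_i,T_j)=|T_i|+|T_j|-2|T_i\cap T_j|\le 2k-2|S_i\cap S_j|$ because $|T_i|,|T_j|\le k$ and $|T_i\cap T_j|\ge |S_i\cap S_j|$. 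Hence for a fixed base the largest achievable minimum distance is $\min_{i<j}\bigl(2k-2|S_i\cap S_j|\bigr)$, and I would answer ``yes'' exactly when some base tuple makes this quantity at least $t$. Correctness follows since the bases of an optimal feasible collection appear among the enumerated tuples and fresh-filling them does at least as well. The running time is $d^{kr}\cdot |U|^{O(1)}$; the per-tuple work of computing pairwise intersections is polynomial in $kr$, which is absorbed into $|U|^{O(1)}$ since $kr\le |U|$ in this regime.

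\textbf{Regime $|U|<kr$.} Here I would brute-force. There are fewer than $2^{|U|}<2^{kr}$ subsets of $U$ of size at most $k$, hence fewer than $2^{kr^2}$ $r$-tuples of them; I would test each tuple for being an $r$-tuple of hitting sets with $\mindiversity$ at least $t$, each test running in time polynomial in the (reduced) input. This yields the claimed bound $2^{kr^2}\cdot (kr)^{O(1)}$.

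\textbf{Main obstacle.} The delicate part is the augmentation analysis in the first regime: proving both the fresh-element count and the domination inequality $d_H(T_i,T_j)\le 2k-2|S_i\cap S_j|$, and checking that restricting bases to inclusion-minimal hitting sets loses no feasible collection. Once these are in place, the algorithm and its running-time analysis are routine, and the role of the case distinction is simply to separate the regime where padding with fresh elements is always possible from the regime where the universe is too small and exhaustive search is affordable instead.
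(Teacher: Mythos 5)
Your proposal is correct and takes essentially the same route as the paper: brute-force enumeration of all $r$-tuples when $|U|<kr$, and otherwise enumerating $r$-tuples of inclusion-minimal hitting sets via Lemma~\ref{lemma:minimaldHS} and padding each base with fresh, pairwise-distinct elements, which maximizes every pairwise Hamming distance simultaneously and hence the minimum. In fact, your explicit fresh-element count and the domination inequality $d_H(T_i,T_j)\le 2k-2|S_i\cap S_j|$ make rigorous two steps the paper only asserts informally.
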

\addtocounter{theorem}{-1}
\endgroup
\begin{proof}
  Let \((U, \F, k, r, t)\) be an instance of \rMDkdHS where $|U| = n$. If $n <
  kr$, we solve the problem by complete enumeration: There are trivially at most
  $2^n$ hitting sets of size at most $k$. We form all $r$-tuples $(T_1, \ldots,
  T_r)$ of them and select the one that maximizes $\mindiversity(T_1, \ldots,
  T_r)$. The running time is at most $O((2^n)^rr^2n)= O(2^{kr^2}kr^3)$.

  We now assume that $n \geq kr$.
  We use the same strategy as in Section~\ref{sec:framework}: We generate all
  $r$-tuples $(S_1, \ldots, S_r)$ of \emph{minimal solutions} and try to augment
  each one to a $r$-tuple $(T_1, \ldots, T_r)$ such that for each $i \in [1,r]$,
  $|T_i| \le k$ and $S_i \subseteq T_i\subseteq V(G)$ hold. The difference is that
  we try to maximize $\mindiversity(T_1, \ldots, T_r)$ instead of
  $\sumdiversity(T_1, \ldots, T_r)$ in the augmentation. Given that we have a
  large supply of $n\ge kr$ elements in $U$, this is easy. To each set $S_i$ we
  add $k-|S_i|$ new elements, taking care that we pick different elements for each
  $S_i$ with are not in any of the other sets $S_j$. The Hamming distance between
  two resulting sets is then
  $\Hammingdistance(T_i,T_j) = \Hammingdistance(S_i,S_j) + (k-|S_i|) + (k-|S_i|)$,
  and it is clear that this is the largest possibly distance that two sets
  $T'_i\supseteq S_i$ and $T'_j\supseteq S_j$ with $|T'_i|,|T'_j|\le k$ can
  achieve. Thus, since our choice of augmentation individually maximizes each
  pairwise Hamming distance, it also maximizes the smallest Hamming distance. This
  procedure can be carried out in $O(kr+n)=O(n)$ time. In addition, we need
  $O(kr^2)=O(n^2)$ time to compute the smallest distance.

  Using Lemma~\ref{lemma:minimaldHS}, we construct the set $\mathcal{S}$ of all
  minimal solutions of the \dHS instance \((U, \F)\), each of size at most \(k\).
  We then go through every $r$-tuple $(S_1, \ldots, S_r) \in \mathcal{S}^r$ and
  augment it optimally, as just described. The running time is
  $d^{kr} \cdot O(n^2)$.
\end{proof}

\begingroup
\def\thetheorem{\ref{thm:fvsminmax}}
\begin{Theorem}
  \rMDkFVS can be solved in time $2^{kr\cdot \max (r,7+\log_2(kr))}\cdot (nr)^{O(1)}$.
\end{Theorem}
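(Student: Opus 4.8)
The plan is to mirror the two-regime strategy of Theorem~\ref{thm:rMDkdHS_is_FPT}, replacing its enumeration of minimal hitting sets by the enumeration of classes of FVS solutions from Lemma~\ref{lem:eqFVS}. When $n<kr$ there are at most $2^{n}<2^{kr}$ feedback vertex sets of size at most $k$; I would form all $r$-tuples of them, evaluate $\mindiversity$ on each, and return the best, in time $2^{kr^{2}}\cdot(nr)^{O(1)}$. This accounts for the first branch of the $\max$. For $n\ge kr$ I would invoke Lemma~\ref{lem:eqFVS} to obtain a set $\mathcal{S}$ of at most $2^{7k}$ classes of FVS solutions describing every feedback vertex set of size at most $k$, and iterate over all $2^{7kr}$ $r$-tuples of classes from $\mathcal{S}$, with repetition.

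The heart of the proof is a min-diversity analogue of the augmentation step. Fix an $r$-tuple $((S_1,\ell_1),\dots,(S_r,\ell_r))$ of classes. I must choose, for each $i$, a feedback vertex set $T_i$ of size at most $k$ that is described by $(S_i,\ell_i)$ -- equivalently, $T_i$ contains at least one vertex from each block $\ell_i(v)$, $v\in S_i$ -- so as to maximize $\mindiversity(T_1,\dots,T_r)$. Unlike the hitting-set case, where for $n\ge kr$ one simply pads each set with private fresh elements, here the block constraints couple the choices across solutions, so a naive enumeration of all minimal solutions in $\sol(S_i,\ell_i)$ is too expensive: a single block may be arbitrarily large, so the product of the block sizes is unbounded.

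The key idea is a candidate reduction. For each block $\ell_i(v)$ I would retain only a set $C_{i,v}$ of at most $kr$ representatives: one fixed \emph{private} vertex of $\ell_i(v)$ that lies in no block of any other class (if one exists), together with, for each of the at most $kr-1$ blocks $\ell_j(w)$ belonging to the other classes of the tuple, one fixed vertex of $\ell_i(v)\cap\ell_j(w)$ whenever this intersection is nonempty. The crucial point is that if an optimal solution shares a vertex $x$ among several blocks from different classes, then $x$ lies in each of the corresponding pairwise intersections and is therefore retained in every relevant $C_{i,v}$; conversely, a representative that is private in the optimum may be replaced by the canonical private vertex without changing any Hamming distance. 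Hence there is an optimal augmentation in which every block's representative is drawn from its $C_{i,v}$, and any remaining budget of each $T_i$ can be filled with vertices outside $\bigcup_j T_j$, which exist because $n\ge kr\ge|\bigcup_j T_j|$ and which, being fresh, can only increase pairwise distances while preserving the feedback-vertex-set property (solutions are upward closed).

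Consequently, for a fixed tuple of classes it suffices to enumerate, independently for each of the at most $k$ blocks of each class, a representative from the corresponding $C_{i,v}$: at most $(kr)^{k}$ choices per class and $(kr)^{kr}$ choices in total. For each such choice the representatives fix all shared vertices, the optimal fresh padding is immediate, and $\mindiversity$ can be read off in polynomial time; I keep the maximum over all choices. The augmentation therefore runs in $(kr)^{kr}\cdot(nr)^{O(1)}$ time, and multiplying by the $2^{7kr}$ tuples of classes yields $2^{7kr}(kr)^{kr}\cdot(nr)^{O(1)}=2^{kr(7+\log_2(kr))}\cdot(nr)^{O(1)}$ for this regime; combining the two regimes via the $\max$ gives the claimed bound. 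The step I expect to be the main obstacle is justifying the candidate reduction -- in particular, arguing that restricting to the pairwise-intersection representatives $C_{i,v}$ loses no optimal sharing pattern, including vertices shared by three or more solutions, and that fresh padding is simultaneously optimal for the \emph{minimum} distance rather than merely for individual pairwise distances as in the $\sumdiversity$ setting.
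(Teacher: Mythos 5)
Your overall architecture matches the paper's: the same brute-force regime for $n<kr$, and for $n\ge kr$ the enumeration of all $2^{7kr}$ $r$-tuples of classes from Lemma~\ref{lem:eqFVS} followed by an augmentation step, with the identical arithmetic $2^{7kr}(kr)^{kr}=2^{kr(7+\log_2 kr)}$. The gap is exactly where you feared: the candidate reduction is wrong, and the reason is that your exchange argument is aimed at the wrong danger. Restricting a block to canonical representatives cannot lose a \emph{sharing} pattern (as you argue), but maximizing $\mindiversity$ is about \emph{avoiding} sharing, and canonical representatives can force spurious collisions. Concretely: two coordinates $i\ne j$ of the tuple may use the very same class (repetition is allowed), so $\ell_i(v)=\ell_j(w)=L$ for some block $L$ with $2\le |L|<kr$. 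Then no vertex of $L$ is ``private'' in your sense, and $C_{i,v}$ and $C_{j,w}$ both collapse to the single fixed representative of $L\cap L=L$, forcing $T_i$ and $T_j$ to collide on that vertex, whereas the true optimum picks two distinct vertices of $L$. This is realizable end-to-end: for a $4$-cycle with $k=1$, $r=3$, Lemma~\ref{lem:eqFVS} produces (essentially) the classes $(\{1\},1\mapsto\{1\})$ and $(\{u\},u\mapsto\{2,3,4\})$; every tuple achieving the optimum value $2$ must use the second class at least twice, your candidate sets for those coordinates all shrink to one fixed vertex of $\{2,3,4\}$, and your algorithm reports $0$ instead of $2$. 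Choosing the intersection representatives per ordered pair does not repair this in general, since the choice is made obliviously and nothing guarantees a system of distinct representatives inside the candidate sets.

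The paper's proof sidesteps the need for any representative-selection lemma by a simple threshold: if $|\ell_i(v)|<kr$, it enumerates \emph{all} elements of the block (no reduction at all), and only if $|\ell_i(v)|\ge kr$ is the choice deferred and later filled greedily with a vertex not occurring in any other chosen set --- such a vertex exists precisely because the block has at least $kr$ elements while all solutions together contain fewer than $kr$ vertices, and a fresh vertex weakly increases every pairwise Hamming distance simultaneously, hence is optimal for the minimum as well. The counting is unaffected: there are at most $kr$ blocks in total across the tuple and each non-deferred block offers fewer than $kr$ choices, so the number of partial tuples is still at most $(kr)^{kr}$. In other words, your bound of $kr$ candidates per block is achievable, but only by taking the candidates to be \emph{all} elements of small blocks (plus ``anything fresh'' for large ones), not a canonically chosen subset; if you replace your candidate-reduction step by this threshold rule, the rest of your argument (fresh padding, reading off $\mindiversity$, the final runtime) goes through as written.
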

\addtocounter{theorem}{-1}
\endgroup
\begin{proof}
  Let $G$ be a $n$-vertex graph.
  If $n < kr$, we again solve the problem by complete enumeration:
  There are trivially at most $2^n$
  feedback vertex sets of size at most $k$. We form all $r$-tuples
  $(T_1, \ldots, T_r)$ of them and
 select the one that maximizes
$\mindiversity(T_1, \ldots, T_r)$.
The running time is at most $O((2^n)^rr^2n)=
O(2^{kr^2} r^2n 
)$.

We assume now that $n \geq kr$.
As in Section~\ref{Diverse_FVS}, we 
construct a set $\mathcal{S}$ of at most $2^{7k}$ classes of FVS solutions of $G$,
using Lemma~\ref{lem:eqFVS}.
Then we go through all $(2^{7k})^r$ $r$-tuples of classes
$S = ((S_1,\ell_1), \ldots, (S_r,\ell_r)) \in \mathcal{S}^r$.
For each such $r$-tuple,
we look for the
$r$-tuple $(T_1, \ldots, T_r)$ of \fvs s such that each
 $T_i$ is described by $(S_i,\ell_i)$, and the objective value
 $\mindiversity(T_1, \ldots, T_r)$ is maximized.
 So far, the procedure is completely analogous to the algorithm of
 Theorem~\ref{thm:fvs}
 in 
Section~\ref{Diverse_FVS}
for maximizing $\sumdiversity(T_1, \ldots, T_r)$.
 
Now,
in going from a class $(S_i,\ell_i)$ to $T_i$, we have to select
a vertex from every set $\ell_i(v)$, for $v\in S_i$, and we may add an arbitrary
number of additional vertices, up to size~$k$.
We make this selection as follows:
Whenever $|\ell_i(v)|<kr$, we simply try all possibilities of choosing an
element of
$\ell_i(v)$ and putting it into $T_i$.
If $|\ell_i(v)|\ge kr$, we defer the choice for later.
In this way, we have created
at most $(kr)^{kr}$ ``partial'' \fvs s $(T^0_1, \ldots, T^0_r)$ 

For each such 
$(T^0_1, \ldots, T^0_r)$, we now add the remaining elements.
In each list $\ell_i(v)$ which has been deferred, we greedily pick an element
that is distinct from all other chosen elements. This is always possible since
the list is large enough. Finally, we fill up the sets to size $k$, again
choosing fresh elements each time. Each such choice is an optimal choice,
because it increases the Hamming distance between the concerned set $T_i$ and
\emph{every} other set $T_j$ by~1, which is the best that one can hope for. As
we proceed to this operation for each $S \in \mathcal{S}^r$, where
$|\mathcal{S}| \leq 2^{7k}$, and that for each such $S$, we create at most
$(kr)^{kr}$ $r$-tuples, we obtain an algorithm running in time
$2^{7kr}\cdot (kr)^{kr} \cdot n^{O(1)}$. The theorem follows.
\end{proof}

\section{Conclusions and Open Problems}\label{conclusion}

In this work, we have considered the paradigm of
finding small diverse collections of reasonably good solutions to combinatorial problems,
which has recently been introduced to the field of
fixed-parameter tractability theory~\cite{arBaFeJaOlRo2019}.

We have shown that finding diverse collections of \(d\)-hitting sets and
feedback vertex sets can be done in \FPT time. While these problems can be
classified as \FPT via the kernels and a treewidth-based meta-theorem proved
in~\cite{arBaFeJaOlRo2019}, the methods proposed here are of independent
interest. We introduced a method of generating a maximally diverse set of
solutions from a set that either contains all minimal solutions of bounded size
(\dHS) or from a collection of structures that in some way \emph{describes} all
solutions of bounded size (\textsc{Feedback Vertex Set}). In both cases, the
maximally diverse collection of solutions is obtained via a network flow model,
which does not rely on any specific properties of the studied problems.
It would be interesting to see if this strategy can be applied to give
FPT-algorithms for diverse problems that are not covered by the meta-theorem or
the kernels presented in~\cite{arBaFeJaOlRo2019}.

While the problems in~\cite{arBaFeJaOlRo2019} as well as the ones in
Sections~\ref{sec:framework} 
and~\ref{Diverse_FVS}, seek
to maximize the \emph{sum} of all pairwise Hamming distances, we also studied
the variant that asks to maximize the \emph{minimum} Hamming distance, taken
over each pair of solutions. This was motivated by an example where the former
measure does not perform as intended (Section~\ref{modeling}). We showed that
also under this objective, the diverse variants of \dHS and \FVS are \FPT. It
would be interesting to see whether this objective also allows for a (possibly
treewidth-based) meta-theorem.

In~\cite{arBaFeJaOlRo2019}, the authors ask whether there is a problem that is in \FPT parameterized by solution size 
whose $r$-diverse variant becomes \W{1}-hard upon adding $r$ as another component of the parameter. 
We restate this question here.
\begin{question}{\cite{arBaFeJaOlRo2019}}\label{q:w1}
Is there a problem $\Pi$ with solution size $k$, such that $\Pi$ is \FPT parameterized by $k$,
while \textsc{Diverse $\Pi$}, asking for $r$ solutions, is \W{1}-hard parameterized by $k + r$?
\end{question}

To the best of our knowledge, this problem is still wide open.
We believe that the $\mindiversity$ measure is more promising to obtain such a result rather than the $\sumdiversity$ measure.
A possible way to tackle both measures at once might be a parameterized (and strenghtened) analogue of the following approach that is well-studied in classical complexity.
Yato and Seta propose a framework~\cite{yato2003complexity} to prove \NP-completeness of finding a \emph{second} solution
to an \NP-complete problem.
In other words, there are some problems where given one solution it is still \NP-hard to 
determine whether the problem has a different solution.

From a different perspective, one might want to identify problems where obtaining one solution
is polynomial-time solvable, but finding a diverse collection of $r$ solutions becomes \NP-hard.
The targeted running time should be \FPT parameterized by $r$ (and maybe $t$, the diversity target) only.
We conjecture that this is most probably \NP- or \W-hard in general.
However, we believe it is interesting to search for well-known problems where it is not the case.

\paragraph{Acknowledgments.} The first, second, third and fourth authors would like to
    thank Mike Fellows for introducing them to the notion of
    diverse \FPT algorithms and sharing the manuscript ``The
    Diverse \(X\) Paradigm''~\cite{fellows2018diverseXparadigm}.

\bibliographystyle{plainurl}
\bibliography{diverse}

\begin{thebibliography}{10}

\bibitem{ahuja94}
Ravindra~K. Ahuja, James~B. Orlin, Clifford Stein, and Robert~E. Tarjan.
\newblock Improved algorithms for bipartite network flow.
\newblock {\em SIAM Journal on Computing}, 23:906--933, 1994.
\newblock \href {http://dx.doi.org/10.1137/S0097539791199334}
  {\path{doi:10.1137/S0097539791199334}}.

\bibitem{apsel2012lifted}
Udi Apsel and Ronen~I. Brafman.
\newblock Lifted {MEU} by weighted model counting.
\newblock In {\em Proceedings of the Twenty-Sixth AAAI Conference on Artificial
  Intelligence}, pages 1861--1867. AAAI Press, 2012.

\bibitem{arenas2003scalar}
Marcelo Arenas, Leopoldo Bertossi, Jan Chomicki, Xin He, Vijay Raghavan, and
  Jeremy Spinrad.
\newblock Scalar aggregation in inconsistent databases.
\newblock {\em Theoretical Computer Science}, 296(3):405--434, 2003.
\newblock \href {http://dx.doi.org/10.1016/S0304-3975(02)00737-5}
  {\path{doi:10.1016/S0304-3975(02)00737-5}}.

\bibitem{bacchus2003algorithms}
Fahiem Bacchus, Shannon Dalmao, and Toniann Pitassi.
\newblock Algorithms and complexity results for \#{SAT} and {B}ayesian
  inference.
\newblock In {\em Proceedings of the 44th Annual IEEE Symposium on Foundations
  of Computer Science}, pages 340--351. IEEE, 2003.
\newblock \href {http://dx.doi.org/10.1109/SFCS.2003.1238208}
  {\path{doi:10.1109/SFCS.2003.1238208}}.

\bibitem{arBaFeJaOlRo2019}
{Julien} {Baste}, {Michael R.} {Fellows}, {Lars} {Jaffke}, {Tomáš}
  {Masařík}, {Mateus} {de Oliveira Oliveira}, {Geevarghese} {Philip}, and
  {Frances A.} {Rosamond}.
\newblock Diversity in combinatorial optimization.
\newblock {\em CoRR}, 2019.
\newblock \href {http://arxiv.org/abs/1903.07410} {\path{arXiv:1903.07410}}.

\bibitem{final}
Julien Baste, Lars Jaffke, Tomáš Masařík, Geevarghese Philip, and Günter
  Rote.
\newblock Fpt algorithms for diverse collections of hitting sets.
\newblock {\em Algorithms}, 12(12), 2019.
\newblock \href {http://dx.doi.org/10.3390/a12120254}
  {\path{doi:10.3390/a12120254}}.

\bibitem{MaximumVolume}
Karl Bringmann, Sergio Cabello, and Michael T.~M. Emmerich.
\newblock Maximum volume subset selection for anchored boxes.
\newblock In Boris Aronov and Matthew~J. Katz, editors, {\em 33rd International
  Symposium on Computational Geometry (SoCG 2017)}, volume~77 of {\em Leibniz
  International Proceedings in Informatics (LIPIcs)}, pages 22:1--22:15,
  Dagstuhl, Germany, 2017. Schloss Dagstuhl--Leibniz-Zentrum f\"ur Informatik.
\newblock \href {http://dx.doi.org/10.4230/LIPIcs.SoCG.2017.22}
  {\path{doi:10.4230/LIPIcs.SoCG.2017.22}}.

\bibitem{chomicki2005minimal}
Jan Chomicki and Jerzy Marcinkowski.
\newblock Minimal-change integrity maintenance using tuple deletions.
\newblock {\em Information and Computation}, 197(1--2):90--121, 2005.
\newblock \href {http://dx.doi.org/10.1016/j.ic.2004.04.007}
  {\path{doi:10.1016/j.ic.2004.04.007}}.

\bibitem{CyFoKoLoMaPiPiSa2015}
{Marek} {Cygan}, {Fedor V.} {Fomin}, {Lukasz} {Kowalik}, {Daniel} {Lokshtanov},
  {Dániel} {Marx}, {Marcin} {Pilipczuk}, {Michal} {Pilipczuk}, and {Saket}
  {Saurabh}.
\newblock {\em {Parameterized Algorithms}}.
\newblock Springer, 2015.
\newblock \href {http://dx.doi.org/10.1007/978-3-319-21275-3}
  {\path{doi:10.1007/978-3-319-21275-3}}.

\bibitem{dechter2003constraint}
Rina Dechter and David Cohen.
\newblock {\em Constraint Processing}.
\newblock Morgan Kaufmann, 2003.

\bibitem{domshlak2006fast}
Carmel Domshlak and J{\"o}rg Hoffmann.
\newblock Fast probabilistic planning through weighted model counting.
\newblock In {\em Proceedings of the Sixteenth International Conference on
  Automated Planning and Scheduling, ICAPS 2006}, pages 243--252. AAAI Press,
  2006.

\bibitem{MC05}
Matthias Ehrgott.
\newblock {\em Multicriteria Optimization}, volume 491 of {\em LNE}.
\newblock Springer, 2005.

\bibitem{fellows2018diverseXparadigm}
Michael~Ralph Fellows.
\newblock The diverse {X} paradigm.
\newblock Manuscript, November 2018.

\bibitem{GaborBPS18}
Thomas Gabor, Lenz Belzner, Thomy Phan, and Kyrill Schmid.
\newblock Preparing for the unexpected: Diversity improves planning resilience
  in evolutionary algorithms.
\newblock In {\em 2018 {IEEE} International Conference on Autonomic Computing,
  {ICAC} 2018, Trento, Italy, September 3-7, 2018}, pages 131--140, 2018.
\newblock \href {http://dx.doi.org/10.1109/ICAC.2018.00023}
  {\path{doi:10.1109/ICAC.2018.00023}}.

\bibitem{galle1989branch}
Per Galle.
\newblock Branch \& sample: A simple strategy for constraint satisfaction.
\newblock {\em BIT Numerical Mathematics}, 29(3):395--408, 1989.
\newblock \href {http://dx.doi.org/10.1007/BF02219227}
  {\path{doi:10.1007/BF02219227}}.

\bibitem{gandhi2007general}
Sorabh Gandhi, Chiranjeeb Buragohain, Lili Cao, Haitao Zheng, and Subhash Suri.
\newblock A general framework for wireless spectrum auctions.
\newblock In {\em Proceedings 2nd IEEE International Symposium on New Frontiers
  in Dynamic Spectrum Access Networks}, pages 22--33. IEEE, 2007.

\bibitem{hoefer2014approximation}
Martin Hoefer, Thomas Kesselheim, and Berthold V{\"o}cking.
\newblock Approximation algorithms for secondary spectrum auctions.
\newblock {\em ACM Transactions on Internet Technology (TOIT)},
  14(2--3):16:1--16:24, 2014.
\newblock \href {http://dx.doi.org/10.1145/2663496}
  {\path{doi:10.1145/2663496}}.

\bibitem{idan2010efficient}
M.~Idan, G.~Iosilevskii, and L.~Ben-Yishay.
\newblock Efficient air traffic conflict resolution by minimizing the number of
  affected aircraft.
\newblock {\em International Journal of Adaptive Control and Signal
  Processing}, 24(10):867--881, 2010.

\bibitem{ioannou2013management}
Ekaterini Ioannou and Slawek Staworko.
\newblock Management of inconsistencies in data integration.
\newblock In {\em Data Exchange, Integration, and Streams}, volume~5, pages
  217--225. Schloss Dagstuhl-Leibniz-Zentrum f\"ur Informatik, 2013.
\newblock \href {http://dx.doi.org/10.4230/DFU.Vol5.10452.217}
  {\path{doi:10.4230/DFU.Vol5.10452.217}}.

\bibitem{KarpNPc}
Richard~M Karp.
\newblock Reducibility among combinatorial problems.
\newblock In {\em Complexity of Computer Computations}, pages 85--103.
  Springer, 1972.

\bibitem{KOCIUMAKA2014556}
Tomasz Kociumaka and Marcin Pilipczuk.
\newblock Faster deterministic feedback vertex set.
\newblock {\em Information Processing Letters}, 114(10):556--560, 2014.
\newblock \href {http://dx.doi.org/10.1016/j.ipl.2014.05.001}
  {\path{doi:10.1016/j.ipl.2014.05.001}}.

\bibitem{Hypervolume2d}
Tobias Kuhn, Carlos~M. Fonseca, Luís Paquete, Stefan Ruzika, Miguel~M. Duarte,
  and José~Rui Figueira.
\newblock Hypervolume subset selection in two dimensions: Formulations and
  algorithms.
\newblock {\em Evolutionary Computation}, 24(3):411--425, 2016.
\newblock \href {http://dx.doi.org/10.1162/EVCO_a_00157}
  {\path{doi:10.1162/EVCO_a_00157}}.

\bibitem{littman2001stochastic}
Michael~L. Littman, Stephen~M. Majercik, and Toniann Pitassi.
\newblock Stochastic {Boolean} satisfiability.
\newblock {\em Journal of Automated Reasoning}, 27(3):251--296, 2001.
\newblock \href {http://dx.doi.org/10.1023/A:1017584715408}
  {\path{doi:10.1023/A:1017584715408}}.

\bibitem{LouisR92}
Sushil~J. Louis and Gregory J.~E. Rawlins.
\newblock Syntactic analysis of convergence in genetic algorithms.
\newblock In {\em Proceedings of the Second Workshop on Foundations of Genetic
  Algorithms. Vail, Colorado, USA, July 26-29 1992.}, pages 141--151, 1992.
\newblock \href {http://dx.doi.org/10.1016/b978-0-08-094832-4.50015-5}
  {\path{doi:10.1016/b978-0-08-094832-4.50015-5}}.

\bibitem{MorrisonJ01}
Ronald~W. Morrison and Kenneth A.~De Jong.
\newblock Measurement of population diversity.
\newblock In {\em Artificial Evolution, 5th International Conference, Evolution
  Artificielle, {EA} 2001, Le Creusot, France, October 29-31, 2001, Selected
  Papers}, pages 31--41, 2001.
\newblock \href {http://dx.doi.org/10.1007/3-540-46033-0\_3}
  {\path{doi:10.1007/3-540-46033-0\_3}}.

\bibitem{Neumann:2018}
Aneta Neumann, Wanru Gao, Carola Doerr, Frank Neumann, and Markus Wagner.
\newblock Discrepancy-based evolutionary diversity optimization.
\newblock In {\em Proceedings of the Genetic and Evolutionary Computation
  Conference}, GECCO '18, pages 991--998, New York, NY, USA, 2018. ACM.
\newblock \href {http://dx.doi.org/10.1145/3205455.3205532}
  {\path{doi:10.1145/3205455.3205532}}.

\bibitem{palacios2005pruning}
H{\'e}ctor Palacios, Blai Bonet, Adnan Darwiche, and H{\'e}ctor Geffner.
\newblock Pruning conformant plans by counting models on compiled d-{DNNF}
  representations.
\newblock In {\em Proceedings of the Fifteenth International Conference on
  Automated Planning and Scheduling, ICAPS 2005}, pages 141--150. AAAI Press,
  2005.

\bibitem{pema2011tractability}
Enela Pema, Phokion~G. Kolaitis, and Wang-Chiew Tan.
\newblock On the tractability and intractability of consistent conjunctive
  query answering.
\newblock In {\em Proceedings of the 2011 Joint EDBT/ICDT Ph. D. Workshop},
  pages 38--44. ACM, 2011.
\newblock \href {http://dx.doi.org/10.1145/1966874.1966881}
  {\path{doi:10.1145/1966874.1966881}}.

\bibitem{ReSmVe2004}
{Bruce} {Reed}, {Kaleigh} {Smith}, and {Adrian} {Vetta}.
\newblock {Finding odd cycle transversals}.
\newblock {\em Operations Research Letters}, 32(4):299--301, 2004.
\newblock \href {http://dx.doi.org/10.1016/j.orl.2003.10.009}
  {\path{doi:10.1016/j.orl.2003.10.009}}.

\bibitem{sang2005performing}
Tian Sang, Paul Bearne, and Henry Kautz.
\newblock Performing {B}ayesian inference by weighted model counting.
\newblock In {\em Proceedings of the 20th National Conference on Artificial
  Intelligence-Volume 1}, pages 475--481. AAAI Press, 2005.

\bibitem{Solow1994}
Andrew~R. Solow and Stephen Polasky.
\newblock Measuring biological diversity.
\newblock {\em Environmental and Ecological Statistics}, 1(2):95--103, Jun
  1994.
\newblock \href {http://dx.doi.org/10.1007/BF02426650}
  {\path{doi:10.1007/BF02426650}}.

\bibitem{Tarjan}
Robert~E. Tarjan.
\newblock {\em Data Structures and Network Algorithms}.
\newblock SIAM, Philadelpia, 1983.
\newblock \href {http://dx.doi.org/10.1137/1.9781611970265}
  {\path{doi:10.1137/1.9781611970265}}.

\bibitem{defining-diversity-measures-2010}
Tamara Ulrich, Johannes Bader, and Lothar Thiele.
\newblock Defining and optimizing indicator-based diversity measures in
  multiobjective search.
\newblock In Robert Schaefer, Carlos Cotta, Joanna Ko{\l}odziej, and G{\"u}nter
  Rudolph, editors, {\em Parallel Problem Solving from Nature, PPSN XI}, pages
  707--717, Berlin, Heidelberg, 2010. Springer.
\newblock \href {http://dx.doi.org/10.1007/978-3-642-15844-5_71}
  {\path{doi:10.1007/978-3-642-15844-5_71}}.

\bibitem{vela2011understanding}
Adan~Ernesto Vela.
\newblock {\em Understanding conflict-resolution taskload: implementing
  advisory conflict-detection and resolution algorithms in an airspace}.
\newblock PhD thesis, Georgia Institute of Technology, 2011.

\bibitem{WinebergO03a}
Mark Wineberg and Franz Oppacher.
\newblock The underlying similarity of diversity measures used in evolutionary
  computation.
\newblock In {\em Genetic and Evolutionary Computation - {GECCO} 2003, Genetic
  and Evolutionary Computation Conference, Chicago, IL, USA, July 12-16, 2003.
  Proceedings, Part {II}}, pages 1493--1504, 2003.
\newblock \href {http://dx.doi.org/10.1007/3-540-45110-2\_21}
  {\path{doi:10.1007/3-540-45110-2\_21}}.

\bibitem{yato2003complexity}
Takayuki Yato and Takahiro Seta.
\newblock Complexity and completeness of finding another solution and its
  application to puzzles.
\newblock {\em IEICE transactions on fundamentals of electronics,
  communications and computer sciences}, 86(5):1052--1060, 2003.

\end{thebibliography}

\end{document}